\tikzstyle{ball} = [circle,shading=ball, ball color=blue!50!white,  minimum size=1cm]
\tikzstyle{blank} = [fill=white,minimum size=0em]
\tikzstyle{seethrough} = [minimum size=0em]
\tikzstyle{tensor} = [draw,fill=white,minimum height=1.8em,minimum width=2.5em,inner sep=0pt] 
\tikzstyle{phase} = [draw,fill,shape=circle,minimum size=5pt,inner sep=0pt]
\tikzstyle{state} = [draw,fill=white,minimum size=0.5em,inner sep=0pt]
\tikzstyle{unitary} = [draw,fill=white,minimum height=1em,minimum width=3em,inner sep=0.1pt] 
\DeclareMathOperator{\poly}{poly}
\newcommand{\newjointcountertheorem}[3]{
  \newaliascnt{#1}{#2}
  \newtheorem{#1}[#1]{#3}
  \aliascntresetthe{#1}
}
\newtheorem*{thm-plain}{Theorem}
\newtheorem{thm}{Theorem}
\def\Snospace~{\S{}}
\tikzstyle arrowstyle=[scale=1]
\tikzstyle directed=[postaction={decorate,decoration={markings,
    mark=at position .7 with {\arrow[arrowstyle]{stealth};}}}]
\tikzstyle reverse directed=[postaction={decorate,decoration={markings,
    mark=at position .7 with {\arrowreversed[arrowstyle]{stealth};}}}]
\newcommand{\half}{\frac{1}{2}}
\newcommand{\restr}{{% we make the whole thing an ordinary symbol
  \left.\kern-\nulldelimiterspace % automatically resize the bar with \right
  % the function
  \vphantom{\big|} % pretend it's a little taller at normal size
  \right| % this is the delimiter
  }}
\begin{document}

\title{Symmetry-protected adiabatic quantum transistors}

\author{Dominic J. Williamson}
\affiliation{Centre for Engineered Quantum Systems, School of Physics, The University of Sydney, Sydney, NSW 2006, Australia}
\affiliation{Vienna Center for Quantum Science and Technology, Faculty of Physics, \\University of Vienna, A-1090 Wien, Austria}
\author{Stephen D. Bartlett}
\affiliation{Centre for Engineered Quantum Systems, School of Physics, The University of Sydney, Sydney, NSW 2006, Australia}
\date{13 February 2015} 
%\pacs{}

\begin{abstract}
Adiabatic quantum transistors allow quantum logic gates to be performed by applying a large field to a quantum many-body system prepared in its ground state, without the need for local control.  The basic operation of such a device can be viewed as driving a spin chain from a symmetry-protected phase to a trivial phase. This perspective offers an avenue to generalise the adiabatic quantum transistor and to design several improvements.  The performance of quantum logic gates is shown to depend only on universal symmetry properties of a symmetry-protected phase rather than any fine tuning of the Hamiltonian, and it is possible to implement a universal set of logic gates in this way by combining several different types of symmetry protected matter.  Such symmetry-protected adiabatic quantum transistors are argued to be robust to a range of relevant noise processes.
\end{abstract}
  
\maketitle

\section{Introduction}

Quantum computers promise a computational speedup for problems believed to be hard to solve using classical computers. There are many different architectures for the implementation of quantum computation (QC), each realising the same computational power with different requirements on physical hardware. Along with the canonical quantum circuit model, there is measurement-based QC \cite{briegel2001persistent}, adiabatic QC \cite{aharonov200445th}, holonomic QC \cite{zanardi1999holonomic}, and topological QC \cite{kitaev1997quantum}, as well as variations that combine aspects from different models.  One such hybrid scheme --- the adiabatic quantum transistor (AQT) model proposed by Bacon, Flammia and Crosswhite~\cite{bacon2012adiabatic} --- is appealing from a practical perspective, as it demands very minimal control requirements.  This model is technically open loop holonomic QC but draws upon aspects of all the aforementioned architectures, and requires only the ability to prepare the ground state of an interacting many-body Hamiltonian and perform an adiabatic application of a global control field, without the need for any local control.

The AQT model builds upon earlier work on one-dimensional quantum computational wires and their usefulness for measurement-based and holonomic quantum computation~\cite{gross2007measurement,bacon2009adiabatic,bacon2010adiabatic,bartlett2010,kaltenbaek2010,miyake2010quantum,renes2011holonomic,else2012symmetry}. These models are best understood in terms of computation on information encoded in the correlations amongst qubits in the ground state~\cite{gross2007measurement}, which can also be viewed from a very recent perspective as fractionalized edge modes associated with the boundaries of symmetry-protected phases of spin chains~\cite{miyake2010quantum,renes2011holonomic}. In the measurement-based model, a very precise relationship between the computational properties of a spin chain and its symmetry-protected quantum order was developed by Else \textit{et al.}~\cite{else2012symmetry,else2012symmetry2}. Such a precise relationship is lacking in the open loop holonomic QC setting, and so the general physical principles that define the AQT models and give rise to their special properties are not yet explored.

In this paper, we show that the operation of an adiabatic quantum transistor can be viewed as driving the system through a symmetric phase transition, from a symmetry-protected (SP) phase to a trivial symmetric phase, using a global control field.  Within this perspective, we extend the specific AQT gates defined using finely tuned model Hamiltonians in Ref.~\cite{bacon2012adiabatic} to whole SP phases of matter, thereby further reducing the control requirements for this scheme.  Such adiabatic quantum transistors that are based solely on the properties of symmetry-protected phases can be called \emph{symmetry-protected adiabatic quantum transistors} (SPAQT). 

This new perspective in terms of processing quantum logic at the boundary of a SP ordered phase provides several other natural generalizations of the AQT model.  We explore the degenerate ground state encoding used in the AQT model in terms of the defining properties of a SP phase, and in doing so, determine the quantum logic gates that can be implemented by a spin chain referencing only the symmetries of a SP phase. We show how multiple different logic gates can be performed by preserving distinct subgroups of a larger symmetry group during the evolution. Finally we address the issue of errors within the model, drawing the distinction between errors to which the model is inherently robust and errors which will require standard fault tolerance constructions.

The paper is laid out as follows.  In Sec.~\ref{sec:SPintro} we begin with a brief review of symmetry-protected (topological) phases of matter, followed by Sec.~\ref{sec:SPencoding} with an explanation of the general ground state encoding used for any SP phase and its robustness properties throughout the phase.  In Sec.~\ref{sec:ElementaryGate}, we present the general process to implement an elementary logic gate upon the encoded information by adiabatically shifting a phase boundary between SP and trivial matter by a single lattice spacing.  We go on to describe the generic requirements needed to achieve a universal gate set with SP chains in Sec.~\ref{sec:GeneralGates}.  Then, building upon the basic gate construction, we explain in Sec.~\ref{sec:Transistor} how one can implement a symmetry-protected adiabatic quantum transistor gate by adiabatically traversing a symmetric phase transition from a SP to trivial phase.  We discuss the robustness of our proposed scheme to a large class of realistic errors in Sec.~\ref{sec:Errors}, and conclude with a discussion of these results and future directions in Sec.~\ref{sec:Conclusions}.  Explicit details of the operation of an SPAQT based on the Haldane phase of a spin-1 chain are presented in the Appendix, building on the results of Ref.~\cite{renes2011holonomic} and offering several new results.

\section{Symmetry-protected phases}
\label{sec:SPintro}

In this section, we review the definition of SP phases in one-dimension, and their characterisation in terms of the second cohomology class of a symmetry group.  For further details, see Refs.~\cite{chen2011classification,schuch2011classifying,else2012symmetry}.
 
A zero temperature quantum phase is defined as a family of uniformly gapped Hamiltonians (and their ground states) on periodic, regular lattices of all finite sizes that are equivalent under constant-length, adiabatic evolutions that preserve a uniform gap.  A richer set of phases arise when considering Hamiltonians that commute with a given representation of a symmetry group.  Allowing only adiabatic paths that also commute with the symmetry, a SP phase is defined to be a class of symmetric uniformly gapped Hamiltonians $H$ that are equivalent under such symmetry-respecting adiabatic evolutions.  We will consider only on-site symmetries, i.e., those whose representations take the form of a tensor product of the same representation $U_g$ on each physical site, and hence the symmetry condition is $\left[U_g^{\otimes N},H\right]=0, \ \forall g\in G$.  (We note that symmetries which are not on-site may also support SP phases, but we do not consider these here.)  We further restrict our consideration to only those symmetric Hamiltonians possessing a unique symmetric ground state under periodic boundary conditions. In this setting distinct equivalence classes emerge, the class containing a symmetric product state is a trivial symmetric phase and other distinct classes are called symmetry-protected (SP) phases.

A consequence of this definition is that no constant length, gapped, symmetric adiabatic evolution of a ground state in a non-trivial SP phase can map it to a product state. This can be interpreted as being due to a non-trivial symmetry-protected entanglement structure that persists in all ground states of a SP phase \cite{chen2010local}. 
In one dimension, we identify this entanglement by a Schmidt rank greater than one across any bipartition of the system. This entanglement can be intuitively understood as arising from a pair of maximally entangled, fractionalised virtual particles localised on opposite sides of the bipartition which cannot be disentangled by any symmetric adiabatic evolution. There may also be some trivial entanglement present across the bipartition that can be removed by symmetric adiabatic evolution and hence is not robust throughout the phase and consequently not symmetry-protected. Only the nontrivial, symmetry-protected entanglement persists in a renormalisation fixed point state of a SP phase. A chain with open boundary conditions can be viewed as a periodic chain that has been cut open and had its boundaries separated.  In this case, the virtual particles at the boundaries are not correlated, as they are separated by an arbitrary distance of gapped bulk material with exponentially decaying correlations, and because we have removed the relevant Hamiltonian term that was coupling them, they become free.  Hence a nontrivial SP Hamiltonian on an open chain will possess some ground state degeneracy that can be associated to these fractionalised edge modes.

Because the Hamiltonian commutes with the symmetry, its degenerate ground space is closed under the symmetry action. We can therefore restrict the representation of the symmetry group to the ground space. Because the symmetry group acts locally in tensor product on the chain it cannot entangle the two spatially separated edge modes and therefore acts as a tensor product on the left and right edges modes, as $V^L_g\otimes V^R_g$.  While $V^L_g\otimes V^R_g$ must form a unitary representation of the group, there is additional freedom in defining the individual representations $V_g^{L/R}$.  In particular, we can allow an equal and opposite phase in the multiplication rules of the group action on left and right modes, i.e., 
\begin{equation}
  V_g^{L/R} V_h^{L/R} = \omega(g,h)_{L/R} V_{gh}^{L/R}\,,
\end{equation} 
where $\omega(g,h)_{L/R}$ is a function on $G \times G$ giving a $U(1)$ phase subject to $\omega(g,h)_L\omega(g,h)_R$ being trivial (in the second cohomology sense explained below).  The associativity of the multiplication enforces a constraint on the $\omega$ phases.  We further take equivalence classes under multiplication of each $V_g$ by some arbitrary phase function $\beta(g)$ on $G$. The resulting equivalence classes of phase functions $\omega$ on $G \times G$ form the second cohomology group of the symmetry $H^2(G,U(1))$ \cite{chen2013symmetry}. These equivalence classes are in one-to-one correspondence with the different SP phases possible in one dimension \cite{chen2011classification,schuch2011classifying}, and so provide a way of labeling the different SP phases.  Given the cohomology label of the phase the edge modes must transform under some projective representation $V_g$ of the group with the specified cohomology.

\section{Symmetry-protected ground state encoding}
\label{sec:SPencoding}

In this section, we will describe how quantum information can be encoded into the fractionalized degree of freedom associated with a single edge mode of a non-trivial SP phase of a one-dimensional spin chain.  

It is well known that topologically ordered phases of matter provide families of quantum codes that are insensitive to the microscopic detail of the Hamiltonian. Symmetry-protected phases provide degenerate ground states with similar robustness but only to errors that obey a certain symmetry condition. The defining symmetries of a SP phase also provide uniform global operations that enact logical transformations upon the ground space. Remarkably both the encoding and global logical operations are robust to local perturbations of the parent Hamiltonian, so long as these perturbations are symmetric. 

The essential property of a SP phase that enables us to encode information into the ground space is an equivalence of the global symmetry operators and a projective representation acting upon the edge mode.  In many ways, these global symmetry operators are analogous to logical operators of a stabilizer code, as they commute with the constraints (Hamiltonian terms) that define the code and have a nontrivial action upon the encoded information. 
In the particular example of the cluster state model used to define the AQT of Bacon \textit{et al.}~\cite{bacon2010adiabatic}, which has a stabilizer parent Hamiltonian, these symmetry operators are logical operators.  In the general case, the terms in the Hamiltonian whose ground space is the `code space' do not necessarily commute with one another.

The physical systems we consider for the remainder of the paper are chains of spin degrees of freedom, with interactions governed by a spatially local Hamiltonian 
\begin{equation}
\label{eq:1dlh}
H_N=\sum_{s=0}^{N-1} H_s
\end{equation}
where the $H_s$ terms act on a constant number of spins around site $s$ and the energy scale is normalized such that $\| H_s \| \leq 1$. We only consider models where the energy gap $\Delta$ between the smallest set of quasi-degenerate eigenvalues (meaning their energy spacing shrinks exponentially as the size of the chain grows) and the next lowest eigenvalue (the first excitation) is uniformly lower bounded by a constant as $N$ increases. We restrict our attention to models where the Hamiltonian's terms all commute with a representation of some symmetry group $G$. The representations we will consider are $N$-fold tensor products $U_g^{\otimes N}$ of some on-site unitary representations $U_g$, and so the relevant symmetry condition is $[ H_N , U_g^{\otimes N}] = 0$ for all $g\in G$.

\subsection{Isolating one edge mode}

\begin{figure}[t]
\center
\includegraphics[width=0.95\linewidth]{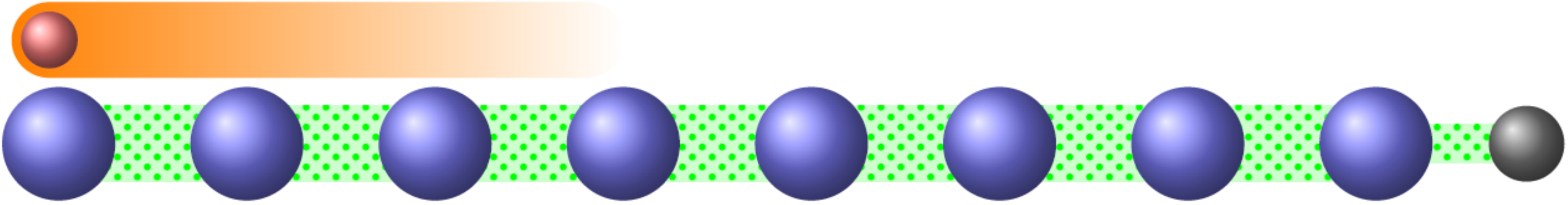}
\caption{A ground state of a symmetric Hamiltonian with one boundary condition (right) fixed.  Large (blue) spheres denote the spins of the chain, each of which transform as $U_g$ under the symmetry action.  A fractional particle, denoted on the right by a small (grey) sphere, transforms as $V_g^{\rm edge}$.  The left edge carries a fractionalized edge degree of freedom (orange), transforming as $U_g^{\otimes N}\otimes V_g^{\text{phys}} |_{E_0} \sim V_g$.}\label{fig:edge}
\end{figure}

A one-dimensional spin chain with fixed boundary conditions from a non-trivial SP phase provides two fractionalized edge modes -- one at each boundary -- that can be used to encode quantum information.  These edge modes can be qubits or qudits (higher-dimensional generalisations of qubits), depending on the dimension of the representation $V_g$.  As the two boundary modes must remain well-separated, for the purposes of quantum computation it is useful to restrict to only one of them; we choose the left edge mode.  If the chain is sufficiently long, the right mode may be ignored.  For convenience of description, however, we shall consider finite chains and provide a conceptually simple way to terminate the right boundary such that it can be subsequently ignored from the description.

To isolate the left edge mode for our encoding, we terminate the right edge by a symmetric coupling $h_{\text{edge}}$ to an additional new particle.  This coupling will remove the right edge from the description, allowing us to focus purely on the left edge mode.  The new particle is required to transform under a projective representation of the symmetry $V_g^{\text{phys}}$ from the same cohomology class as the left edge mode; see Fig.~\ref{fig:edge}.  This coupling will be spatially local but may have to act on a number of sites up to the injectivity radius \cite{perez2006matrix} of a matrix product state representation of the ground state such that it is possible to achieve the desired coupling on the edge mode by acting only on the physical level. The key importance of this boundary fixing is that  the global symmetry action changes as
\begin{equation}
U_g^{\otimes N} \rightarrow U_g^{\otimes N}\otimes V_g^{\text{phys}}.
\end{equation}
Hence the symmetry now acts on the entire chain by a projective representation with the same cohomology class as the remaining left edge mode. By making the global symmetry projective, it allows us to directly identify the restriction of the global symmetry action on the ground space with a \emph{single} irreducible projective representation $V_g$.  Isolating one edge also has the effect of forcing the ground space degeneracy to be exact as the ground space now forms an irreducible projective representation of the symmetry group.  In contrast, when there are two edge modes there may be an energy splitting that shrinks exponentially with the system size due to a weak coupling between the edges that allows the two irreducible projective representations to couple into a direct sum of unitary representations with slightly different energies. Along with convenience of description, avoiding this coupling has the added advantage that it prevents phase errors accumulating due to the energy splitting of the quasi-degenerate levels of the ground space.	

\subsection{Symmetry-protection of the encoding}

A defining feature of 1D SP phases \cite{else2012symmetry} is the identification of the global symmetry's action within the ground space and the action of a projective symmetry on an emergent edge mode, as
\begin{equation}
\label{eq:iden}
U_g^{\otimes N}\otimes V_g^{\text{phys}} \restr_{\text{ground space}} \sim V_g \,.
\end{equation}
This identification persists throughout the whole SP phase and we find it natural to think of these projective symmetries as `logical' operators acting on the information encoded within the ground space.

Any deformation of the Hamiltonian that maintains the symmetry and a uniform lower bound on the spectral gap gives rise to an adiabatic evolution that must remain in the same SP phase.  Hence, the identification given in Eq.~\eqref{eq:iden} remains valid throughout such an evolution, and because of this, we can talk about the same projective symmetry action $V_g$ on the edge mode throughout the evolution. As we maintain the symmetry at every point of the adiabatic deformation, the resulting evolution within the ground space must commute with this symmetry $V_g$.  If the edge symmetry $V_g$ is irreducible, then by Schur's lemma any adiabatic deformation within the SP phase must necessarily act as the identity operation on the encoded information (up to a global phase) and hence we say that the information encoded into the ground space is protected by the symmetry of the SP phase.  (We do not consider the case where $V_g$ is reducible, although we note that the methods developed in the context of decoherence free subsystems \cite{zanardi1,zanardi2,lidar1998decoherence,lidar2012review} can be used to generalise our results to the reducible case.)

In summary, in this section we have considered adiabatic deformations that strictly preserve the symmetry of the Hamiltonian. However, if an evolution changes the way the symmetry acts, then the arguments given above do not necessarily hold. It is precisely this fact that allows us to perform nontrivial unitary gates upon encoded states using adiabatic evolutions, which will be described in the coming sections.

\section{Elementary gates}
\label{sec:ElementaryGate}

In this section, we demonstrate how an adiabatic evolution involving terms at the boundary between a non-trivial SP phase and a trivial one can result in the implementation of a protected quantum logic gate acting on the fractionalised edge mode.  This gate provides a generalisation of the holonomic gates described in Ref.~\cite{renes2011holonomic}.

We first illustrate the functioning of this gate using a model Hamiltonian that is representative of the SP phase, and subsequently show that the gate properties are generic throughout the phase and independent of the microscopic details of the Hamiltonian.  For simplicity, we will only consider models of SP spin chains with nearest neighbor interactions. (This condition will hold for all 1D models after sufficient real space renormalization.)  We use the notation $H_{i,i+1}$ to indicate a Hamiltonian term acting on sites $i$ and $i+1$, such that the full Hamiltonian $H = \sum_i H_{i,i+1}$ describes a spin chain in a nontrivial SP phase.  We explicitly single out the interaction term $h_{\text{edge}}$ that couples the spins near the right edge (site $N{-}1$) to a fractional particle (that transforms under the projective representation $V_g^{\text{phys}}$) effectively fixing the relevant boundary condition. (This could be thought of as modelling the left edge mode of a semi-infinite SP chain.)
We require that the Hamiltonian terms commute with the symmetry 
\begin{align}
\left[ H_{i,i+1}, U_g^{\otimes N}\otimes V_g^{\text{phys}}\right] &= 0,\quad \forall\ i,g\ ,\\ 
\left[ h_{\text{edge}}, U_g^{\otimes N}\otimes V_g^{\text{phys}}\right] &= 0,\quad \forall\ g \ ,
\end{align}
and that the degenerate ground space transforms under an irreducible projective representation of the symmetry
\begin{equation}
U_g^{\otimes N}\otimes V_g^{\text{phys}} \restr_{\text{ground space}} \sim V_g\,, 
\end{equation}
corresponding to the free, left edge mode.

We now construct a spin chain possessing a phase boundary, with a trivial symmetric phase on the left of the boundary and a non-trivial SP phase on the right.  We model the trivial phase by introducing a uniform symmetric field $F$ that acts on a single site, satisfies $\left[F,U_g\right]=0,\ \forall g$, and possesses a nondegenerate ground state $\Ket{\chi}$.  Due to the symmetry condition, the ground state $\Ket{\chi}$ carries a one-dimensional representation, i.e., a character $\chi: G \rightarrow U(1)$, such that $U_g\Ket{\chi}=\chi(g)\Ket{\chi}$.  Writing $F_i$ for $F$ at site $i$ in tensor product with identity elsewhere, the uniform field Hamiltonian $H_F:=\sum_i F_i$ has a unique, symmetric ground state and hence lies in a trivial symmetric phase.   

\begin{figure}[t]
\center
\includegraphics[width=0.95\linewidth]{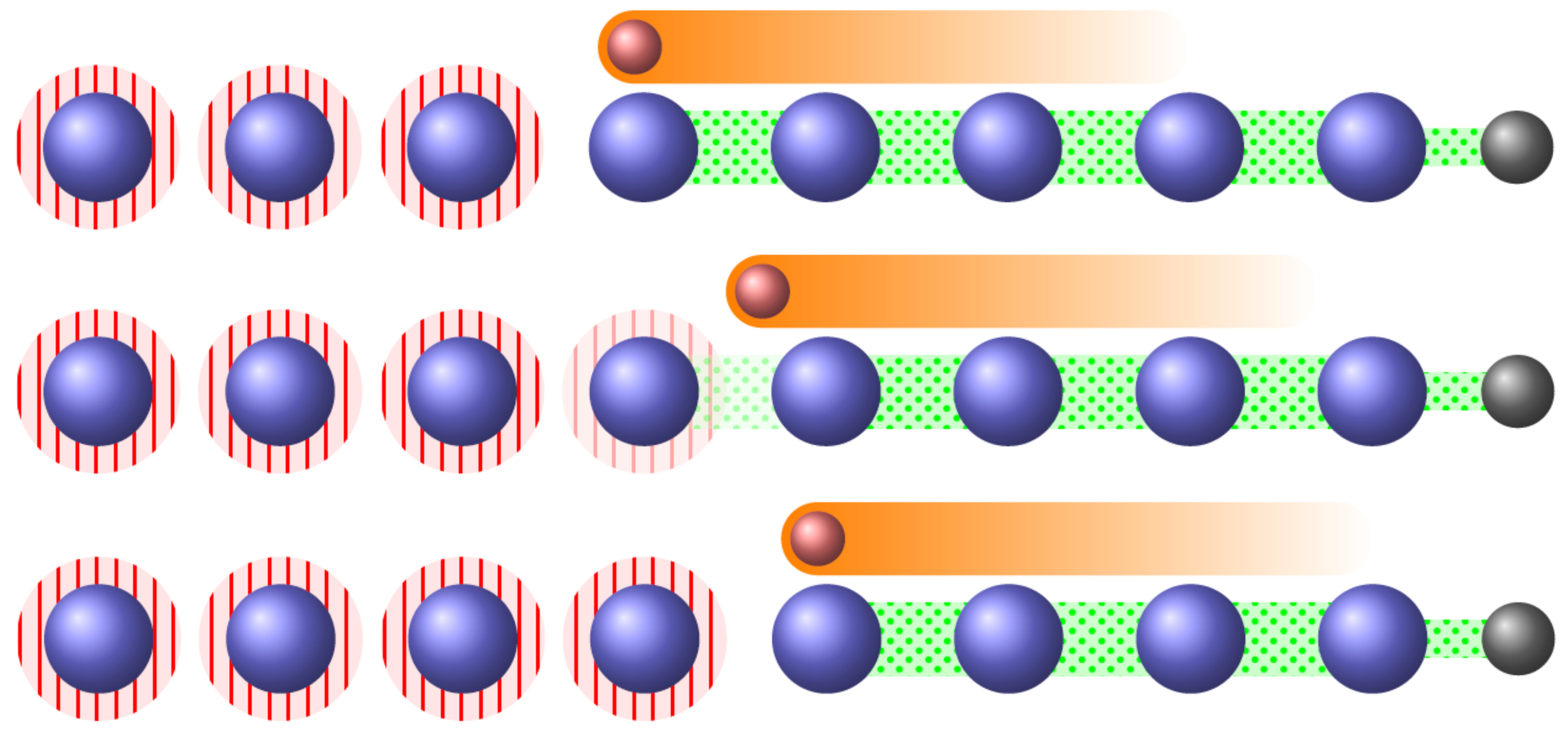}
\caption{Snapshots of the evolution described by Eq.~\eqref{eq:ua2} at times $H_{8,3}(0)$, $H_{8,3}(T/2)$ and $H_{8,3}(T)$. The notation used is the same as that in Fig.~\ref{fig:edge}, with local fields depicted by (red) lines.}
\label{fig:chainholonomy}
\end{figure}

The Hamiltonian describing a boundary, localized at some site $j$, between a trivial symmetric phase to the left of $j$ and a SP phase to the right (with far right boundary fixed) is then given by
\begin{equation}
H_{N,j}:= \sum_{i=0}^{j-1} F_i + \sum_{i=j}^{N-2} H_{i,i+1} +h_{\text{edge}}\ .
\label{eq:ua}
\end{equation}
States in its ground space take the form of a tensor product between a trivial symmetric phase described by $|\chi\rangle^{\otimes j}$ on sites $i<j$ and a SP ordered ground state on sites $i\geq j$.
The symmetry acts within the ground space as
\begin{align}
U_g^{\otimes N}\otimes V_g^{\text{phys}} \restr_{\text{ground space}} &= \chi(g)^j\  U_g^{\otimes N-j}\otimes V_g^{\text{phys}} \restr_{\text{ground space}} \nonumber \\
&\sim \chi(g)^j V_g\,.
\label{eq:sym11}
\end{align}
As discussed in the previous section, we assume that the projective representation $V_g$ is irreducible.  

The elementary gate is performed by adiabatically moving the phase boundary one site to the right along the chain, from $j$ to $j+1$, as depicted in Fig.~\ref{fig:chainholonomy}. This is achieved by turning off the two body Hamiltonian interaction $H_{j,j+1}$ and turning on a symmetric field $F_j$.  The adiabatic evolution is governed by the time-dependent Hamiltonian
\begin{equation}
H_{N,j}(t)= \sum_{i=0}^{j-1} F_i + f(t) F_j + g(t) H_{j,j+1} + \sum_{i=j+1}^{N-2} H_{i,i+1} +h_{\text{edge}}
\label{eq:ua2}
\end{equation}
with $f(0)=g(T)=0$ and $g(0)=f(T)=1$. We require $T=\Omega\left(1/\Delta^3\right)$ to ensure adiabaticity, where $\Delta$ is the minimum spectral gap of $H_N(t)$ as $t$ is varied.  (To be precise, we also require standard smoothness conditions on $f$ and $g$~\cite{jordan2008quantum}.) 

The adiabatic evolution induced by the parametrised Hamiltonian of Eq.~\eqref{eq:ua2} is designed to commute with the symmetry and to preserve the ground space.  We use this property to enact a logical transformation on the SP edge mode by decoupling the spin at site $j$ into a symmetric state $\Ket{\chi}$ in tensor product with the remaining nontrivial SP phase on sites $j{+}1,\dots,N{-}1$. This evolution moves the phase boundary and the edge mode one site to the right along the chain while simultaneously multiplying the projective symmetry action on the edge mode by a phase $\chi(g)^{-1}$.  That is, at the start of the evolution the ground space carries an irreducible projective representation of the symmetry group given by $V_g$ as in Eq.~\eqref{eq:iden}, and at the end of the evolution carries the irreducible representation $V'_g = \chi(g)^{-1} V_g$.  This rephasing does not change the cohomology class of the projective representation and hence does not alter the class of SP ordered phase to the right of the phase boundary.  However, it does allow the evolution within the ground space $W$ to act nontrivially on the encoded information as specified by its intertwining of the two irreducible projective representations
\begin{equation}
W V_g W^{\dagger} = \chi(g) V_g,\quad \forall\ g \,,
\end{equation}
or, equivalently,
\begin{equation}
W V_g = \chi(g) V_g W, \quad \forall\ g.
\label{eq:wv}
\end{equation}
In the case where $\chi$ is the trivial character $\chi=1$, Eq.~\eqref{eq:wv} together with Schur's lemma imply that $W=c\ \mathbb{I},\ \exists c\in U(1)$  (since we are considering an irreducible representation). In contrast, nontrivial characters may generate nontrivial evolutions within the encoded subspace, but in that case we cannot simply invoke Schur's lemma to calculate the evolution.  

For the remainder of this section, we will characterise the evolutions $W$ when $\chi\neq 1$.  We emphasise that our results characterising the evolutions depend only on the symmetry properties of the Hamiltonians and their ground states (which persist throughout a SP phase), and not on any specific description of these Hamiltonians as in Eq.~(\ref{eq:ua2}).  We first recast the conditions of Eq.~\eqref{eq:wv} in terms of the fixed point of a particular channel. We then present some basic properties of this channel and show that the fixed points of different channels arising from the same SP phase form a faithful projective representation of the abelianisation of the symmetry group (presented as Theorem~\ref{thm:abelianisation}). Finally we show how to construct fixed points from tensors satisfying a certain natural symmetry condition and give a simple example.

\subsection{The group of elementary gates}

We begin by expressing the conditions of Eq.~\eqref{eq:wv} in terms of the fixed point of a channel defined by the one-dimensional representation (character) $\chi\in \hat{G}$, where $\hat{G}$ is the group of one dimensional representations, and then show that the fixed point is unique up to a phase.

\begin{prp}
The conditions $W V_g = \chi(g) V_g W$ for all $g$ are equivalent to the condition that the matrix $W$ is a fixed point of the channel
\begin{equation}
\label{eq:chan}
\Gamma_{\chi} (\, \cdot\, ) := \frac{1}{|G|} \sum_{g\in G}\chi(g) V_g (\,\cdot\, ) V_g^{\dagger} \,, \end{equation}
\end{prp}
\begin{proof}
To show the equivalence in the forward direction, we note that if a matrix $W$ satisfies $W V_g = \chi(g) V_g W$ for all $g$ then
\begin{align}
\Gamma_{\chi} (W) &= \frac{1}{|G|}\sum_g \chi(g) V_g W V_g^{\dagger} \nonumber \\
&= \frac{1}{|G|}\sum_g \chi(g) \chi^{-1}(g) W V_g V_g^{\dagger} \nonumber \\
&= \frac{1}{|G|}\sum_g W  \nonumber \\
&= W\,.
\end{align}
Conversely, if $\Gamma_{\chi} (W) =W$, then $\forall\ g \in G$
\begin{align}
V_g W &= V_g \frac{1}{|G|} \sum_h \chi(h) V_h W V_h^{\dagger} \nonumber \\
&= \frac{1}{|G|}\sum_h \chi(h) V_{gh} W V_h^{\dagger} \nonumber \\
&= \frac{1}{|G|}\sum_{h'} \chi(g^{-1}h') V_{h'} W V_{g^{-1}h'}^{\dagger} \nonumber \\
&= \chi(g^{-1}) \frac{1}{|G|}\sum_{h'} \chi(h') V_{h'} W V_{h'}^{\dagger}V_{g^{-1}}^\dagger \nonumber \\
&= \chi^{-1}(g) \Gamma_{\chi} (W) V_{g} \nonumber \\
&= \chi^{-1}(g) W V_{g} \,.
\end{align}
\end{proof}

\begin{lem}
The fixed point of the channel $\Gamma_\chi$ is unique up to a phase.
\label{lem}
\end{lem}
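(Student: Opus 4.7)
The plan is to reduce uniqueness to Schur's lemma applied to the irreducible projective representation $V_g$. By the preceding proposition, the fixed points of $\Gamma_{\chi}$ are exactly the matrices $W$ satisfying $W V_g = \chi(g) V_g W$ for all $g \in G$; these form a linear subspace, and the goal is to show it is at most one-dimensional.

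The key step is to take adjoints of the fixed-point condition. Using unitarity of $V_g$ and the fact that $\chi$ is a one-dimensional unitary representation (so $\overline{\chi(g)} = \chi(g)^{-1}$), one derives the companion identity $W^\dagger V_g = \overline{\chi(g)}\, V_g W^\dagger$. Given any two fixed points $W_1, W_2$, combining the two identities yields
\begin{equation*}
W_1 W_2^\dagger V_g = \overline{\chi(g)} \chi(g)\, V_g W_1 W_2^\dagger = V_g W_1 W_2^\dagger ,
\end{equation*}
so $W_1 W_2^\dagger$ commutes with every $V_g$. Schur's lemma, applied to the irreducible $V_g$, then forces $W_1 W_2^\dagger = c\,\mathbb{I}$ for some $c \in \mathbb{C}$.

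To convert this into genuine proportionality, I would specialize to $W_1 = W_2$, obtaining $W_2 W_2^\dagger = \lambda\,\mathbb{I}$ for some $\lambda \ge 0$. If $\lambda = 0$ then $W_2 = 0$; otherwise $W_2$ is a scalar multiple of a unitary and hence invertible, and multiplying $W_1 W_2^\dagger = c\,\mathbb{I}$ on the right by $W_2$ gives $W_1 = (c/\lambda)\, W_2$. Thus any two nonzero fixed points are scalar multiples of one another, and a unit-norm normalization leaves only an overall phase.

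The sole point of care is that Schur's lemma is invoked for a projective rather than a genuine representation, but this is harmless: irreducibility of $V_g$ is by definition the statement that its image generates an irreducible matrix algebra (equivalently, $V_g$ lifts to an irreducible genuine representation of a central extension of $G$), so the commutant is still just $\mathbb{C}\cdot\mathbb{I}$. I do not expect any real obstacle in the argument; the whole content is the observation that the phase twists from the two factors $W_1$ and $W_2^\dagger$ exactly cancel, which was essentially forced by the construction of $\Gamma_\chi$.
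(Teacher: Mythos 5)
Your proof is correct and follows essentially the same route as the paper: both form the product $W_1 W_2^{\dagger}$ of two fixed points, show the character phases cancel so that it commutes with every $V_g$ (equivalently, is a fixed point of $\Gamma_1$), and invoke Schur's lemma for the irreducible projective representation. Your additional step deducing $W_2 W_2^{\dagger}=\lambda\,\mathbb{I}$ and dividing through to get genuine proportionality is a welcome bit of extra care that the paper elides when it asserts $c\in U(1)$ directly.
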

\begin{proof}
We consider any two fixed point solutions $W,\ W'$ for the same channel $\Gamma_\chi$ and combine them to form the matrix $W(W')^{\dagger}$. It is easy to see that $W(W')^{\dagger}$ must be a fixed point of the channel $\Gamma_1$ (the channel given by the trivial character $\chi=1$) as
\begin{align}
\Gamma_1\left( W(W')^{\dagger} \right) &= \frac{1}{|G|}\sum_g V_g W(W')^{\dagger} V_g^{\dagger} \nonumber \\
&= \frac{1}{|G|}\sum_g \chi^{-1}(g) W V_g \left( V_g W' \right)^{\dagger} \nonumber \\
&= \frac{1}{|G|}\sum_g \chi^{-1}(g) W V_g \left( \chi^{-1}(g) W' V_g \right)^{\dagger} \nonumber \\
&= \frac{1}{|G|}\sum_g |\chi^{-1}(g)|^2 W V_g V_g^{\dagger} (W')^{\dagger} \nonumber \\ 
&=  W(W')^{\dagger}.
\end{align}
Hence we must have $W(W')^{\dagger}=c\ \mathbb{I},\ \exists c\in U(1)$ by Schur's lemma and the irreducibility of the projective representation $V_g$.
\end{proof}

These results show that the maps $W$ satisfying $\Gamma_\chi(W) = W$ are determined, up to a phase, by the characters $\chi$ of $G$.  We will therefore label the fixed point of $\Gamma_\chi$ by the matrix $W_\chi$, where we make an arbitrary choice of multiplicative phase factor.

The next theorem reveals the group structure of the maps $W_\chi$.  

\begin{thm}\label{thm:abelianisation}
The fixed points $\{W_\chi, \chi\in \hat{G}\}$ of the channels $\Gamma_\chi$ form a faithful projective representation of the abelianisation of the symmetry group $G$.  
\label{thm2}
\end{thm}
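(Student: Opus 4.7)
The plan is to promote the uniqueness statement of Lemma~\ref{lem} into a group law by exploiting the multiplicativity inherent in the defining relation $W_\chi V_g = \chi(g) V_g W_\chi$. First, I would verify directly that for any two characters $\chi_1, \chi_2 \in \hat{G}$, the product $W_{\chi_1} W_{\chi_2}$ satisfies
\begin{equation}
(W_{\chi_1} W_{\chi_2}) V_g = W_{\chi_1}\,\chi_2(g) V_g W_{\chi_2} = (\chi_1\chi_2)(g)\, V_g (W_{\chi_1} W_{\chi_2}),
\end{equation}
so by the Proposition it is a fixed point of $\Gamma_{\chi_1\chi_2}$. Lemma~\ref{lem} then forces $W_{\chi_1} W_{\chi_2} = \lambda(\chi_1,\chi_2)\, W_{\chi_1\chi_2}$ for some phase $\lambda(\chi_1,\chi_2) \in U(1)$.

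Next, I would show that $\lambda$ is a 2-cocycle on $\hat{G}$ by computing $W_{\chi_1}W_{\chi_2}W_{\chi_3}$ in two ways using associativity of matrix multiplication and comparing the scalar prefactors of $W_{\chi_1\chi_2\chi_3}$; this immediately gives the identity $\lambda(\chi_1,\chi_2)\lambda(\chi_1\chi_2,\chi_3) = \lambda(\chi_2,\chi_3)\lambda(\chi_1,\chi_2\chi_3)$, so $\chi \mapsto W_\chi$ is a projective representation of the abelian group $\hat{G}$. For faithfulness, suppose $W_\chi = c\,\mathbb{I}$ for some $c \in U(1)$; substituting into the defining equation yields $cV_g = c\chi(g)V_g$ for every $g$, forcing $\chi \equiv 1$. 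Hence the kernel of the projective representation consists only of the trivial character.

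To finish, I would identify $\hat{G}$ with the abelianisation. Every one-dimensional character of $G$ is trivial on the commutator subgroup $[G,G]$, so there is a canonical isomorphism $\hat{G} \cong \widehat{G^{ab}}$, and since $G^{ab}$ is a finite abelian group, Pontryagin duality supplies an isomorphism $\widehat{G^{ab}} \cong G^{ab}$. Transporting the family $\{W_\chi\}_{\chi\in\hat{G}}$ across this identification delivers the claimed faithful projective representation of $G^{ab}$.

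The step I expect to be the main obstacle is an existence issue tacitly built into the statement: the labels $\chi \in \hat{G}$ are only useful once a nonzero fixed point $W_\chi$ of $\Gamma_\chi$ is known to exist for every such character. For $\chi = 1$ this is immediate ($W_1 = \mathbb{I}$), but in general it is equivalent to the rescaled projective representation $g \mapsto \chi(g)V_g$ being isomorphic to $V_g$, which is not automatic for arbitrary irreducible projective representations. I would either defer this to the tensor-network construction of fixed points that the paper develops immediately after this theorem, or verify it directly by computing the multiplicity of each $\chi$ in the genuine $G$-representation obtained by conjugation on $\mathrm{End}(V)$. Granted existence, the three checks above are all short calculations.
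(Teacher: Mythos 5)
Your argument is correct and follows essentially the same route as the paper: show that $W_{\chi_1}W_{\chi_2}$ intertwines $V_g$ with $(\chi_1\chi_2)(g)V_g$ (hence is a fixed point of $\Gamma_{\chi_1\chi_2}$), invoke the uniqueness lemma to extract the phase $\lambda(\chi_1,\chi_2)$, identify $\hat{G}$ with $G/[G,G]$, and establish faithfulness by noting that a scalar fixed point forces $\chi\equiv 1$ (the paper phrases this last step via orthonormality of characters, i.e.\ $\Gamma_\chi(\mathbb{I})=\delta_{\chi,1}\mathbb{I}$, which is equivalent to your substitution). Your closing caveat about the existence of a nonzero $W_\chi$ for every $\chi\in\hat{G}$ is a legitimate observation that the theorem's statement leaves implicit; the paper resolves it only afterwards, via the construction $W_\chi=cA[\chi]$ from a $(U_g,V_g)$-symmetric MPS tensor, exactly as you anticipate.
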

\begin{proof}  Let $W_\chi$, $W_\varphi$ be the fixed points corresponding to characters $\chi$, $\varphi$ respectively.  Note the one dimensional representations of $G$ form an abelian group under pointwise multiplication, i.e., $[\chi \cdot \varphi] (g) := \chi(g) \varphi(g)$.  Then, by Lemma~\ref{lem}, the fixed point $W_{\chi\cdot \varphi}$ of the channel $\Gamma_{\chi\cdot \varphi}$ is also unique up to a phase. Now observe 
\begin{align}\label{eq:wawbsol}
\Gamma_{\chi \cdot \varphi} (W_{\chi}W_{\varphi}) &= \frac{1}{|G|}\sum_g \chi(g) \varphi (g) V_g W_{\chi}W_{\varphi} V_g^{\dagger} \\ 
&= \frac{1}{|G|}\sum_g \chi(g) \varphi (g) \chi^{-1}(g) W_{\chi}V_g W_{\varphi} V_g^{\dagger} \nonumber\\ 
&= \frac{1}{|G|}\sum_g \varphi (g) \varphi^{-1}(g) W_{\chi} W_{\varphi} V_g V_g^{\dagger} \nonumber\\ 
&= W_{\chi} W_{\varphi}.\nonumber
\end{align}
and so $W_{\chi}W_{\varphi}$ is also a fixed point of the channel $\Gamma_{\chi\cdot \varphi}$.
Hence $W_{\chi}W_{\varphi}=\alpha(\chi,\varphi) W_{\chi \cdot \varphi}$ for some phase function $\alpha:G'\times G' \rightarrow U(1)$, where $G':=G/\left[G,G\right]$ is the abelianisation (maximal abelian quotient) of $G$.  Because $W_\chi$ was itself only defined up to a phase, the possible solutions form a projective representation of $G'$.  Furthermore, this representation is faithful, since the identity is a fixed point of $\Gamma_{\chi}$ if and only if $\chi=1$ by the orthonormality of distinct characters.
\end{proof}

The abelianisation appears because it is isomorphic to the group of one dimensional representations of $G$.  This abelianisation is a natural object in this context, since $\Gamma_{\chi\cdot \varphi}=\Gamma_{\varphi \cdot \chi}$.  We emphasise, however, that being a projective representation of an abelian group, the unitaries $W_\chi$ need not commute, only `commute up to a phase', i.e., $W_\chi W_\varphi = e^{i\theta} W_\varphi W_\chi$.  

Theorem~\ref{thm:abelianisation} represents the central result of this section, in that it determines the set (actually a group) of unitary logic gates that can be performed by adiabatically shifting the boundary between a SP ordered phase and a trivial symmetric phase, as described by the Hamiltonian in Eq.~(\ref{eq:ua2}).  This directly generalises the approach of Ref.~\cite{renes2011holonomic} to arbitrary groups, and moreover can be thought of as a generalisation of this scheme to generate open loop holonomic gates via the manipulation of the phase boundary at the edge of a SP ordered spin chain.

\subsection{Constructing elementary gates from tensors}

In the following, we provide an explicit construction of the elementary gates using tensor network language, and in doing so connect the elementary gates to the so called by-product operators \cite{else2012symmetry,gross2007novel} that arise in measurement-based quantum computation (MBQC) using a SP phase~\cite{else2012symmetry2}.  We begin by demonstrating some properties of the channels $\Gamma_\chi$.

\begin{lem}
The channels $\Gamma_\chi$ are orthogonal projectors. Equivalently, $\Gamma_{\chi} \circ \Gamma_{\varphi} =  \delta_{\varphi,\chi} \Gamma_{\chi}$.
\label{lem1}
\end{lem}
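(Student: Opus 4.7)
The plan is to verify the equivalent identity $\Gamma_\chi\circ \Gamma_\varphi = \delta_{\chi,\varphi}\,\Gamma_\chi$ by a direct computation, and then note that Hermiticity of $\Gamma_\chi$ as a superoperator upgrades the idempotent $\Gamma_\chi$ to an orthogonal projector.

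First I would unfold the composition:
\begin{equation*}
\Gamma_\chi\circ\Gamma_\varphi(X)=\frac{1}{|G|^2}\sum_{g,h\in G}\chi(g)\varphi(h)\,V_g V_h\, X\, V_h^\dagger V_g^\dagger .
\end{equation*}
The key observation is that although $V$ is only a projective representation, $V_g V_h = \omega(g,h) V_{gh}$ with $|\omega(g,h)|=1$, so the cocycle phases cancel in the adjoint action and $V_g V_h X V_h^\dagger V_g^\dagger = V_{gh}XV_{gh}^\dagger$. Re-indexing $k:=gh$ and using $\varphi(g^{-1}k)=\overline{\varphi(g)}\,\varphi(k)$ (since $\varphi$ is a one-dimensional unitary representation) factorizes the double sum as
\begin{equation*}
\Gamma_\chi\circ\Gamma_\varphi(X)=\Bigl(\tfrac{1}{|G|}\sum_g \chi(g)\overline{\varphi(g)}\Bigr)\cdot\Gamma_\varphi(X).
\end{equation*}

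Next I would invoke the orthonormality of characters of $G$, which makes the bracketed scalar equal to $\delta_{\chi,\varphi}$. This yields $\Gamma_\chi\circ\Gamma_\varphi = \delta_{\chi,\varphi}\Gamma_\chi$, giving both idempotence ($\Gamma_\chi^2=\Gamma_\chi$) and mutual orthogonality of different channels in one stroke.

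To justify calling $\Gamma_\chi$ an \emph{orthogonal} projector (not merely an idempotent), I would check self-adjointness in the Hilbert--Schmidt inner product: starting from $\langle \Gamma_\chi(X),Y\rangle_{\mathrm{HS}}=\tfrac{1}{|G|}\sum_g \overline{\chi(g)}\tr(X^\dagger V_g^\dagger Y V_g)$ and substituting $g\mapsto g^{-1}$, the identities $\overline{\chi(g^{-1})}=\chi(g)$ and $V_{g^{-1}}(\cdot)V_{g^{-1}}^\dagger = V_g^\dagger(\cdot)V_g$ (again, the projective phases cancel in the adjoint action) give $\langle \Gamma_\chi(X),Y\rangle_{\mathrm{HS}}=\langle X,\Gamma_\chi(Y)\rangle_{\mathrm{HS}}$. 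Combined with idempotence, this establishes the orthogonal projector property.

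No step looks seriously obstructive: the only subtlety is keeping track of the cocycle phases from the projective representation and verifying that in every adjoint expression they appear as $|\omega|^2=1$; once that is handled, character orthonormality does the rest of the work.
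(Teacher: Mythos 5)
Your proof is correct and follows essentially the same route as the paper's: unfold the double sum, use the fact that the cocycle phases cancel in the adjoint action so that $V_g V_h (\cdot) V_h^\dagger V_g^\dagger = V_{gh}(\cdot)V_{gh}^\dagger$, re-index, split the character by multiplicativity, and invoke orthonormality of characters to produce the $\delta_{\chi,\varphi}$. The only addition is your verification of Hilbert--Schmidt self-adjointness to justify the word \emph{orthogonal}, a point the paper leaves implicit; that check is also correct.
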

\begin{proof} 
For an arbitrary matrix $M$, we have
\begin{align}\label{eq:chcomp}
&\Gamma_{\chi} \circ \Gamma_{\varphi}\ (M) = \Gamma_{\varphi} \bigl( \Gamma_{\chi} (M) \bigr) \nonumber \\ 
&= \frac{1}{|G|^2}\sum_g \varphi(g) V_g \Bigl( \sum_h \chi(h) V_h M V_h^\dagger \Bigr) V_g^\dagger \nonumber\\
&= \frac{1}{|G|^2}\sum_{g,h} \varphi(g)\chi(h) V_{gh} M V_{gh}^\dagger \nonumber\\
&= \frac{1}{|G|^2}\sum_{a,b} \varphi(b)\chi(b^{-1}a) V_a M V_a^\dagger \nonumber\\
&= \frac{1}{|G|^2}\sum_{a,b} \varphi(b)\chi(b^{-1}) \chi(a)V_a M V_a^\dagger \nonumber\\
&= \frac{1}{|G|^2} \Bigl[ \sum_{b} \varphi(b)\chi^{*}(b) \Bigr] \sum_{a} \chi(a)V_a M V_a^\dagger \nonumber\\
&= \delta_{\varphi, \chi} \Gamma_\chi (M) \,,
\end{align}
where we have made use of the orthonormality of distinct characters. 
\end{proof}

Finding the fixed points $\Gamma_{\chi} \left( W_\chi \right) = W_\chi$ of the channel is essentially the same problem as finding the symmetric subspace of the representation $\chi(g) V_g \otimes V_g^{*}$, where $*$ denotes complex conjugation. We can see that the maps $\Pi_\chi := \frac{1}{|G|} \sum_g \chi(g) V_g \otimes V_g^{*}$ form a set of orthogonal projections by Lemma~\ref{lem1}. The representation $\chi(g) V_g \otimes V_g^{*}$ acts upon the two virtual degrees of freedom associated to any single site in a symmetric matrix product state (MPS) representation of a renormalisation fixed point ground state in a SP phase; see Fig.~\ref{fig:ax1} and Refs.~\cite{chen2011classification,schuch2011classifying,else2012symmetry}.

\begin{figure}[ht]
\center
\includegraphics[width=0.78\linewidth]{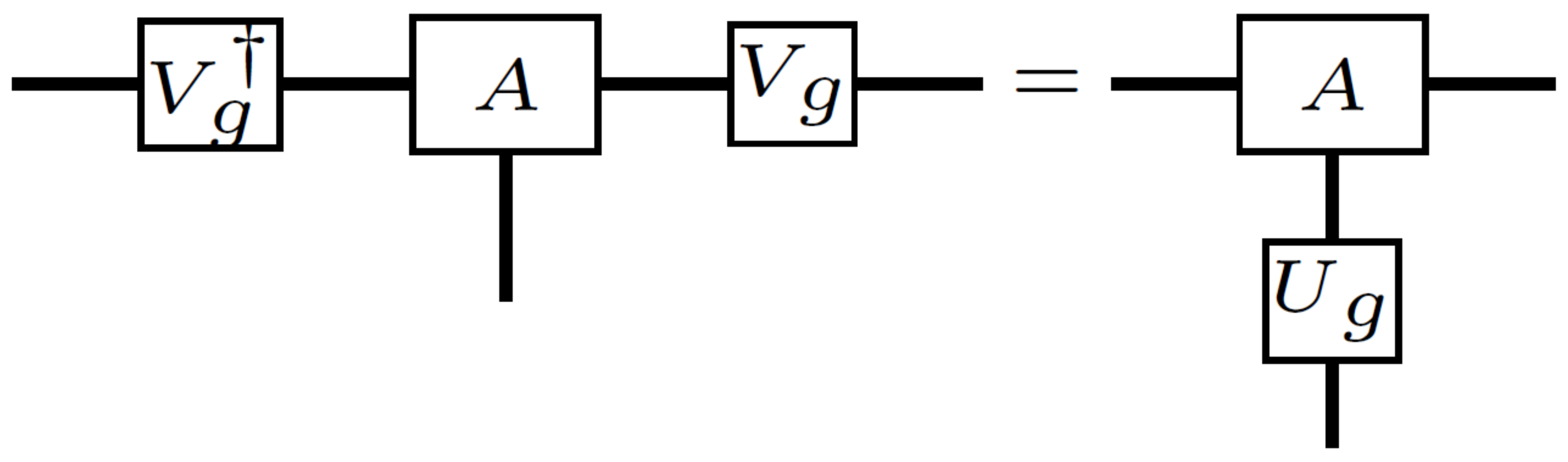}
\caption{The symmetry condition satisfied by a renormalisation fixed point MPS tensor of a SP phase.  A tensor $A$ satisfying this condition is referred to as a \textit{$(U_g,V_g)$-symmetric  MPS tensor}.}
\label{fig:ax1}
\end{figure}

We can equivalently understand the virtual entangled states acted upon by this representation in terms of a Clebsch-Gordon matrix coupling the two projective representations to a single representation on the physical level. We note that the Clebsch-Gordon matrix is essentially the same object as the fixed point MPS tensor but without any projection onto a subspace at the physical level. 

We define a \textit{$\left(U_g,V_g\right)$-symmetric MPS tensor} $A_{\alpha,\beta}^i$, $i=1,\ldots,d$ and $\alpha,\beta = 1,\ldots,D$ to be a tensor that obeys the symmetry condition depicted in Fig.~\ref{fig:ax1}, i.e.,
\begin{equation}
\sum_{\gamma,\delta=1}^D \left(V_g\right)_{\alpha,\gamma} A_{\gamma,\delta}^i \left(V_g^\dagger\right)_{\delta,\beta} = \sum_{j=1}^d A_{\alpha,\beta}^j \otimes \left( U_g\right)_{j,i}
\end{equation}
for all $g \in G$, where $U_g$ is a unitary representation and $V_g$ a projective representation of $G$.
This definition encompasses the cases of fixed point MPS tensors and Clebsch-Gordon matrices. 

The matrix $A\left[ \psi \right]_{\alpha,\beta}$ can be constructed by projecting the physical leg of the tensor $A_{\alpha,\beta}^i$ onto the state $\Ket{\psi} \in \mathbb{C}^d$ as depicted in Fig.~\ref{fig:ax}, i.e.,
\begin{equation}
A\left[ \psi \right]_{\alpha,\beta} : = \sum_{i=1}^d A_{\alpha,\beta}^i \Braket{i | \psi}.
\end{equation}

\begin{thm}
Given a $\left(U_g,V_g\right)$ symmetric MPS tensor $A$ (for the same $U_g,V_g$ appearing in Eq.~\eqref{eq:sym11}) the fixed point of $\Gamma_\chi$ is $W_\chi = c A[\chi]$ for some arbitrary phase $c\in U(1)$.

\label{thm1}
\end{thm}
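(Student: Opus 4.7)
The plan is to verify directly that $A[\chi]$ is a fixed point of $\Gamma_\chi$, and then invoke Lemma~\ref{lem} to conclude that $W_\chi = c A[\chi]$ up to a phase $c \in U(1)$. The whole argument reduces to a single computation exploiting the $(U_g, V_g)$-symmetry of Fig.~\ref{fig:ax1} together with the defining property $U_g |\chi\rangle = \chi(g) |\chi\rangle$ of the character $\chi$.

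Concretely, I would contract the symmetry condition $\sum_{\gamma,\delta} (V_g)_{\alpha,\gamma} A^i_{\gamma,\delta} (V_g^\dagger)_{\delta,\beta} = \sum_j A^j_{\alpha,\beta}(U_g)_{j,i}$ against $\langle i | \chi \rangle$ and sum over $i$. The left-hand side becomes $V_g A[\chi] V_g^\dagger$, while on the right-hand side $\sum_i (U_g)_{j,i}\langle i | \chi \rangle = \langle j | U_g | \chi \rangle$ collapses, via the character equation, to an overall scalar $\chi(g)^{\pm 1}$ times $\langle j | \chi \rangle$. The upshot is the single identity $V_g A[\chi] V_g^\dagger = \chi(g)^{\pm 1} A[\chi]$. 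Substituting this into the definition of $\Gamma_\chi$ in Eq.~\eqref{eq:chan}, the phase produced by the symmetry absorbs the weight $\chi(g)$ in the sum, every group element contributes equally, and one is left with $\Gamma_\chi(A[\chi]) = A[\chi]$.

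Uniqueness then closes the argument: Lemma~\ref{lem} states that any two fixed points of $\Gamma_\chi$ agree up to a global phase, so $W_\chi = c\, A[\chi]$ with $c \in U(1)$. The one piece of bookkeeping I would want to get right is the orientation of the symmetry arrow in Fig.~\ref{fig:ax1}---it controls whether the intermediate identity produces $\chi(g)$ or $\chi(g)^{-1}$, hence whether $A[\chi]$ fixes $\Gamma_\chi$ or $\Gamma_{\chi^{-1}}$---but this is a convention issue that does not touch the structure of the proof. A secondary point worth noting in passing is that the theorem is vacuous (both sides zero) when $\chi$ does not appear in the Clebsch--Gordan decomposition of $V_g \otimes V_g^*$; in that case $A[\chi] = 0$ and no Schur-type assertion is needed.
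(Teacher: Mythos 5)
Your proposal is correct and matches the paper's own proof, which likewise verifies that the $(U_g,V_g)$-symmetry condition of Fig.~\ref{fig:ax1} together with $U_g\Ket{\chi}=\chi(g)\Ket{\chi}$ makes $A[\chi]$ a fixed point of $\Gamma_\chi$ (the paper states this diagrammatically via Fig.~\ref{fig:ax} where you write it out index-wise) and then invokes Lemma~\ref{lem} for uniqueness up to a phase. Your side remarks---that the orientation convention in the symmetry condition decides between $\chi$ and $\chi^{-1}$, and that the statement degenerates when $\chi$ is absent from the decomposition of $V_g\otimes V_g^*$---are both apt and are glossed over in the paper's one-line argument.
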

\begin{proof}
By the symmetry condition in Fig.~\ref{fig:ax1} and the transformation of $\Ket{\chi}$ under $U_g$ we have the property depicted in Fig.~\ref{fig:ax} from which it is clear that $A\left[\chi \right]$ is a fixed point of $\Gamma_\chi$. Hence by Lemma~\ref{lem} we have $W_\chi=c~A\left[\chi\right]$ for some $c\in U(1)$.
\end{proof}
\begin{figure}[ht]
\center
\includegraphics[width=0.9\linewidth]{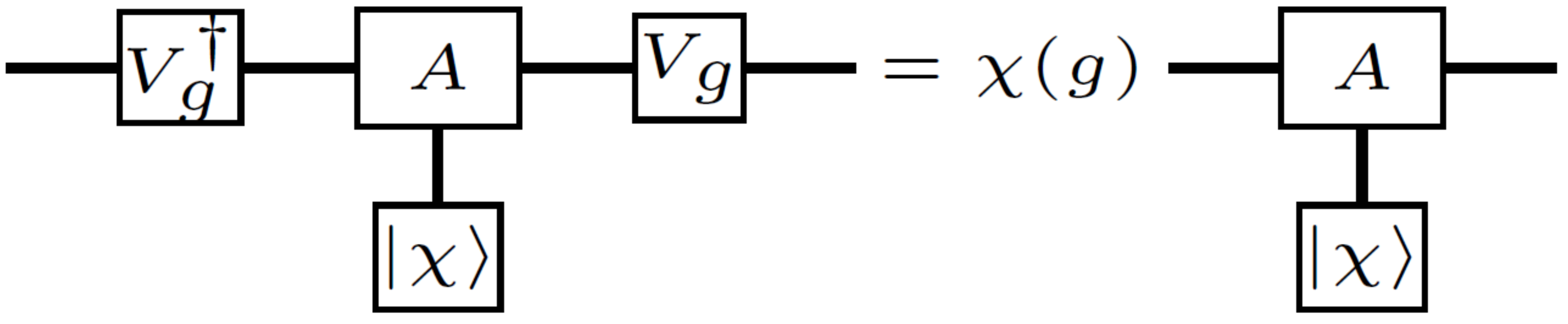}
\caption{A fixed point solution $W_\chi=A\left[\chi\right]$ of the channel $\Gamma_\chi$, constructed from a $(U_g,V_g)$-symmetric MPS tensor $A$, and describing the adiabatic evolution within the ground space.}
\label{fig:ax}
\end{figure}

From this theorem, we also see, conversely,  how to construct a $(U_g,V_g)$-symmetric MPS tensor $A$ from a set of solutions $W_\chi$ to the channels $\Gamma_\chi$ for a given set of one dimensional representations $\{\chi\}$ of $G$, i.e.,
\begin{equation}
A=\sum_{\chi} W_\chi \otimes \bra{\chi}
\label{eq:syma}
\end{equation} 
as depicted in Fig.~\ref{fig:ax2}. For nonabelian groups, however, this limits us to only considering unfaithful, abelian representations $U_g$ on the physical level. This result is essentially the same as that given in Refs.~\cite{else2012symmetry,else2012symmetry2} except that we have not required the additional condition that the projective representation $V_g$ is maximally noncommutative.

\begin{figure}[ht]
\center
\includegraphics[width=0.62\linewidth]{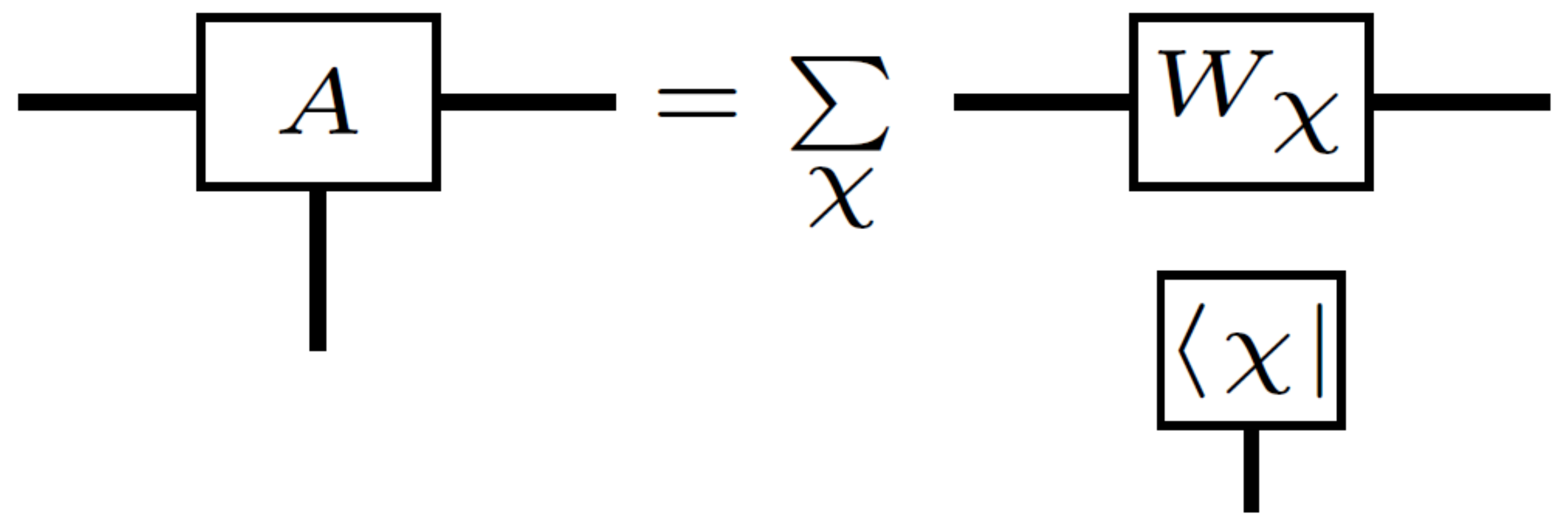}
\caption{A $(U_g,V_g)$-symmetric MPS tensor $A$ constructed from a fixed point solution $W_\chi$ of the channel $\Gamma_\chi$.}
\label{fig:ax2}
\end{figure}

\subsubsection*{Example:  Haldane phase}

We now present two simple examples of the gate construction described above.  These examples are fixed-point states in Haldane phases of spin chains with different local physical dimension.

Firstly the Affleck-Kennedy-Lieb-Tasaki (AKLT) fixed-point state~\cite{AKLT}, which is representative of the spin-1 Haldane phase~\cite{haldane1983continuum} protected by the group of $\pi$-rotations about two orthogonal spatial axes (isomorphic to $\mathbb{Z}_2\times\mathbb{Z}_2$) represented by spin-1 $\pi$-rotation matrices. The choice of the axes of rotation are arbitrary so let us fix them to be $\hat{x}$ and $\hat{z}$. A nontrivial irreducible projective representation of this group is given by the Pauli matrices $\left\{ I,\sigma^x,\sigma^y,\sigma^z \right\}$. The AKLT tensor $A$ is constructed by taking the sum over $m=x,y,z$ of a tensor product of the Pauli matrix $\sigma^{m}$ with the 0-eigenstate $\ket{m}$ of the corresponding spin-1 operator, i.e., $S^m\Ket{m}=0$, 
\begin{equation}
A=\sum_{m\in\{x,y,z\}} \sigma^m \otimes \bra{m}\ .
\end{equation}
The states $\Ket{m}$ transform under the spin-1 $\pi$-rotations via the character $\chi_m$ defined by $\chi_m(1)=\chi_m(m)=1$, with all other values $-1$.  This tensor
$A$ is easily seen to be a $\left(U_g,V_g \right)$ symmetric MPS tensor (where $U_g$ is the group generated by the spin-1 $\pi$-rotations about $\hat{x}$ and $\hat{z}$, and $V_g$ is generated by the Pauli matrices) because it is of the general form in Eq.~\eqref{eq:syma} and Fig.~\ref{fig:ax2}. Hence each matrix $A\left[ m \right]=W_m$ is the fixed point of the corresponding channel $\Gamma_m$.  We note that these maps $W_m$ form the Pauli group of a single qubit, which is a faithful projective representation of the symmetry group $\mathbb{Z}_2\times\mathbb{Z}_2$ that protects the phase.

Our second example is given by the cluster state, a fixed-point state in a Haldane phase of a chain of sites each containing two spin-$\frac{1}{2}$ particles, which is protected by the group generated by Pauli $\sigma^x$ matrices applied simultaneously to all odd or even particles respectively (which is isomorphic to $\mathbb{Z}_2\times\mathbb{Z}_2$). We group pairs of spins together, and the cluster state can be written as a MPS with local tensor 
\begin{equation}
A=\openone\otimes \bra{++} + \sigma^x \otimes \bra{+-} + \sigma^z \otimes \bra{-+} - i \sigma^y \otimes \bra{--} \,.
\end{equation}
The states $\ket{\pm \pm}$ clearly transform as characters of $\mathbb{Z}_2\times\mathbb{Z}_2$ under the on-site representation $U_g$ generated by $\sigma^x\otimes \openone$, $\openone \otimes \sigma^x$. One can readily verify the MPS tensor $A$ is $\left(U_g,V_g \right)$ symmetric (for $U_g$ the on-site representation mentioned above and $V_g$ the projective representation generated by the Pauli matrices) and is of the particular form given in Eq.~\eqref{eq:syma} and Fig.~\ref{fig:ax2}. This immediately yields that the maps $W_\chi$ are Pauli matrices.

\subsection{Section summary}

In this section, we have introduced the basic adiabatic evolution that moves the boundary between a trivial symmetric and SP phase along a chain by one site. This evolution has the effect of moving the information encoded into the edge mode at the boundary spatially while simultaneously applying a nontrivial evolution determined by the symmetric field applied.  We have shown that the abelian group action (one-dimensional representation) $\chi$ of the symmetry group on the trivial ground state completely specifies this evolution on the encoded quantum information.  The one-dimensional representations form an Abelian group, and the associated evolutions form a projective representation of this abelianisation of the symmetry group $G$. We also showed how to construct explicit solutions starting from the symmetric tensors of an exact MPS description of a fixed-point ground state.  Again, we emphasise that our construction makes use only of symmetry group properties, and thus properties of the zero-temperature phase, and not of any specific form of the Hamiltonian.

\section{Extending the set of logic gates}
\label{sec:GeneralGates}
As we have shown, symmetry-respecting adiabatic evolution of the fractionalised edge mode at the boundary of a SP phase allow us to perform certain quantum logic gates, specifically, those given by a projective representation of the abelianisation of the symmetry group.  This is related to similar results for MBQC using the ground state of a non-trivial SP phase \cite{else2012symmetry,else2012symmetry2}, wherein the identity gate can be performed perfectly throughout the phase, up to some unitary correction operators that depend on the measurement results.  These special quantum logic gates `commute up to a phase', and in addition they act only on a single encoded qudit.  As such, they cannot form a universal gate set for quantum computation.  We now explore ways to supplement this elementary gate set with additional operations to make it universal.  

To implement additional logic gates in the standard MBQC model, and similarly in the AQT model, measurements or fields that do not respect the symmetry of the phase are employed.  One would not expect these unprotected operations to function uniformly well throughout a phase, but instead would depend on the microscopic details of the Hamiltonian.  For this reason, we do not explore this direction further.  We point the interested reader to two recent proposals that employ additional ingredients to endow such gates with protection, by imposing additional symmetries (that are not both unitary and on-site)~\cite{prakash2014universal} or via the inclusion of an additional decoupling procedure between gates~\cite{miller2014resource}. 

An alternate approach to achieve more general logic gates is to employ several different types of SP matter, each with an inherent symmetry protecting a distinct gate that together make up a universal set.   The techniques of Ref.~\cite{renes2011holonomic} can be used to illustrate this idea using a particular, well understood example based upon the spin-1 Haldane phase, described in detail in Appendix~\ref{app:haldane}. For a universal set of single qubit gates, we exploit the fact that both $\mathbb{Z}_2\times\mathbb{Z}_2$ $\pi$-rotation symmetry about two orthogonal spatial axes and full $SO(3)$ rotation symmetry protect the same edge modes on a chain of spin-1 particles.  By choosing three distinct SP phases, each protected by a different embedding $\mathbb{Z}_2\times\mathbb{Z}_2\subset SO(3)$, we can implement the Hadamard gate (requiring one distinct phase) and the $\pi/8$-rotation gate (requiring a combination of two others) that together yield a universal single qubit gate set.  We note that the protected gates performed using a single embedding require us to explicitly break the $SO(3)$ symmetry down to a subgroup, and thus are not compatible with any other embedding.  To perform multiple gates protected by different embeddings requires different SP phases, to achieve this one could make use of large (bulk) regions of the requisite nontrivial SP phase for each gate, connected to one another by bulk regions which satisfy the full $SO(3)$ symmetry.  This perspective naturally leads us to the quantum transistors discussed in the next section.

To achieve a universal gate set for many qubits, we complement the single qubit gates with a nontrivial entangling gate (equivalent to a controlled-phase gate up to local Pauli operations) on two qubits achieved within a SP phase of two coupled spin-1 chains. This SP phase is shown to be protected by a semi-direct product symmetry $(\mathbb{Z}_2\times\mathbb{Z}_2)\rtimes\mathbb{Z}_4\subset SO(3)\times SO(3)$ embedded within the rotation group of two decoupled chains, which protects the same edge modes as the full $SO(3)\times SO(3)$ symmetry in this case. This example is based upon the work in Ref.~\cite{renes2011holonomic} but goes beyond, as explained in detail in Appendix~\ref{app:haldane}, with proof of the previously unknown fact that the full gate set is symmetry-protected. Our approach differs from Ref.~\cite{renes2011holonomic} in that they consider a continuous, dynamical embedding $\mathbb{Z}_2\times\mathbb{Z}_2\subset_t SO(3)$; we have avoided this and use three fixed embeddings.

Although the above example is specific to the group $SO(3)$, it is shown in Ref.~\cite{else2013hidden} that for both $SO(2k+1)$ and $SU(k)$ there exist discrete, abelian subgroups protecting the same SP phase as the full continuous symmetry.  Hence we expect in these cases that a similar approach could be employed to generate a desired set of symmetry-protected gates by using different embeddings of the discrete subgroup that protects the same edge modes as the full continuous group.

The single qubit untwisted cluster gate described in Ref.~\cite{bacon2012adiabatic} also falls into our framework (see \cite{else2012symmetry} for a description of the analogous MBQC description of the cluster state), while the twisted cluster gates of Ref.~\cite{bacon2012adiabatic} do not as they employ fields that do not respect the $Z_2 \times Z_2$ symmetry.  Hence we expect such gates to function well only near exact fixed points of a SP phase such as the AKLT point or cluster state. 
Furthermore, their method for implementing two qubit gates does not possess an irreducible edge mode. This raises the possibility of finding two qubit couplings that do not support irreducible edge modes but are still capable of generating unique logical evolutions outside the natural gate set of the specific SP model.  However, there may not be the same robustness of the edge mode encoding if the coupling Hamiltonian in such a process is varied, since the edge modes are no longer irreducible and hence not protected throughout the phase.

\section{Realising a Transistor}
\label{sec:Transistor}

In the previous sections, we showed that quantum logic gates acting on the information encoded in the fractionalised edge modes defined at the boundary of a nontrivial SP phase can be performed using local adiabatic evolutions.  In this section, we will generalise the notion of an adiabatic quantum transistor, defined by Bacon, Crosswhite and Flammia~\cite{bacon2012adiabatic}, to show that our logic gates can also be performed by applying a global field across the whole chain simultaneously in the adiabatic limit. 

For a spin chain of length $N$, let $H^{\rm SP}_V$ be a Hamiltonian in a non-trivial SP phase whose ground state transforms via the projective representation $V_g$ of $G$ associated to a fractionalized left boundary, as in Sec.~\ref{sec:SPencoding}.  Let $H^{\rm triv}_\chi$ be a Hamiltonian describing a uniform field applied to the chain on all but the right boundary, with a non-degenerate ground state that transforms under $G$ with character $\chi$.  As a model, let 
\begin{align}
  H^{\rm SP}_V &= \sum_{i=0}^{N-2} H_{i,i+1}+h_{\text{edge}}\,, \\
  H^{\rm triv}_\chi &= \sum_{i=0}^{N-1} F_i \,,
\end{align}
as described in Sec.~\ref{sec:ElementaryGate}, although our results are not restricted to Hamiltonians of this form and apply throughout the respective SP and trivial phases of any two such Hamiltonians.  We consider an adiabatic evolution initiated entirely within the SP phase and to which a global field is then applied, given by 
\begin{equation}
\label{eq:spqat}
  H_N(t)=f(t) H^{\rm triv}_\chi + g(t) H^{\rm SP}_V\,,
\end{equation}
where again $f(0)=g(T)=0$ and $g(0)=f(T)=1$.  To ensure adiabaticity, we again require $T=\Omega\left(1/\Delta^3\right)$, with $\Delta$ the minimum spectral gap of $H_N(t)$.  With this time-dependent Hamiltonian, it is clear that we are driving the system through a symmetric phase transition from a SP phase to a trivial symmetric phase. Furthermore, the unitary evolution on the ground space will be the same $W_\chi$ no matter what point of the SP phase we start in, and hence the evolution is truly a property of the whole phase.  %
\begin{figure}[t]
\center
\includegraphics[width=0.95\linewidth]{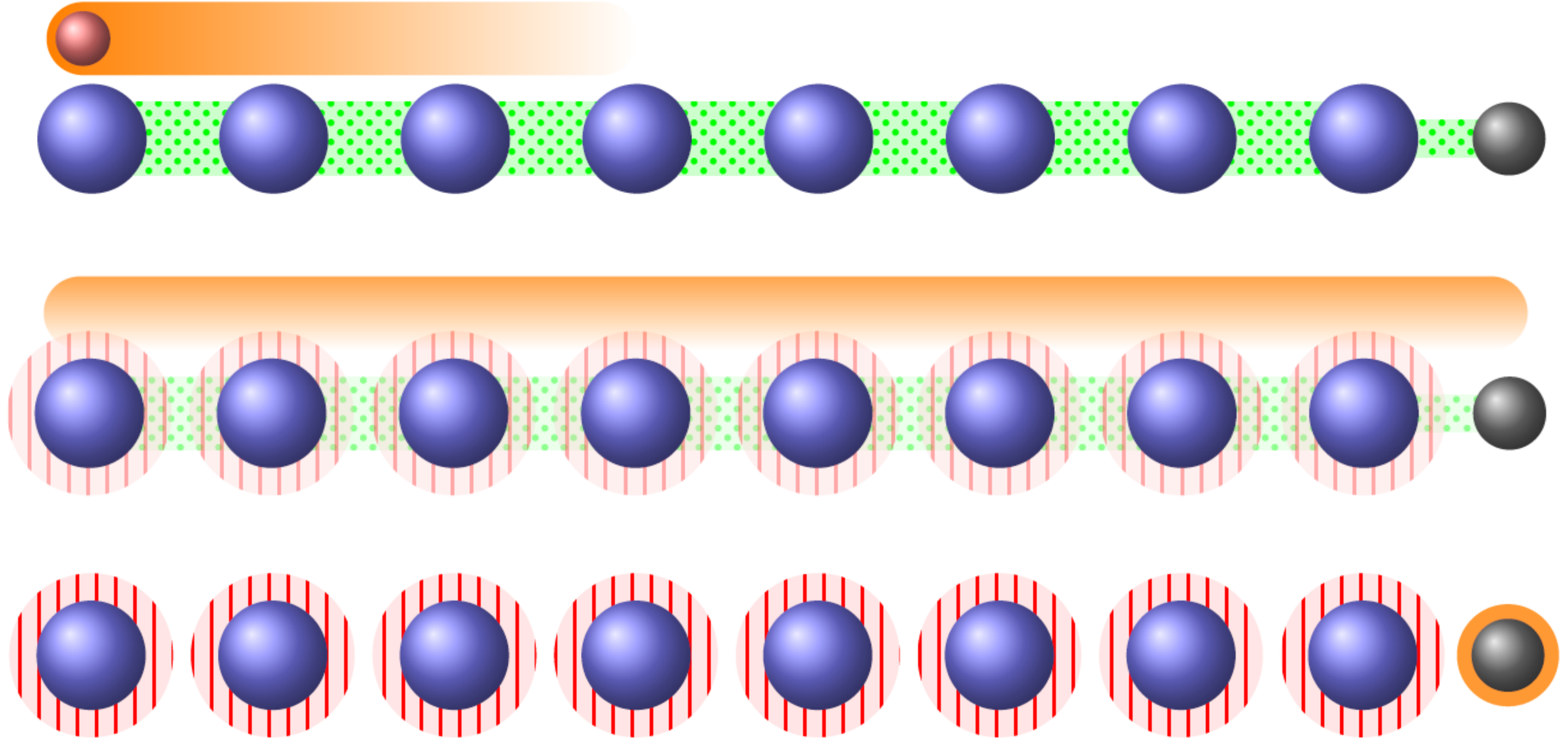}
\caption{Snapshots of the evolution described by Eq.~\eqref{eq:spqat} at times $H_{8}(0)$, $H_{8}(T/2)$ and $H_{8}(T)$. The notation used is the same as that in Fig.~\ref{fig:edge} and Fig.~\ref{fig:chainholonomy}. The orange shading indicates the support of the quantum information that is initially localised in the left edge mode, which then becomes delocalised across the whole chain and ends up in the physical particle at the right edge.}
\label{fig:chaintransistor}
\end{figure}

We note that the assumption of adiabaticity is significant, given that, in the thermodynamic limit ($N{\to}\infty$), this evolution will pass through a phase transition.  The total time $T$ will be much longer than that of the single spin evolution of Eq.~\ref{eq:ua2} as the minimum gap $\Delta$ will approach zero as the system size grows (although remaining nonzero for any finite size). The exact rate at which the gap closes will determine the efficiency with which we are able to simulate circuits on single or multiple qudits (dependent upon the coupling Hamiltonian) and the required time must not increase by more than a polynomial factor for the scheme to be viable. Hence, we require that the gap can be bounded from below by the inverse of a polynomial in the system size, $\Delta = \Omega\left(1/\poly(N)\right)$. 

There are relatively few techniques for bounding the spectral gap of arbitrary Hamiltonians, and so one does not expect to prove efficiency in general for all SP phases.  However, as noted by Bacon \textit{et al.}~\cite{bacon2012adiabatic}, a proof for a universal gate set would be an important step.  
For 1D systems we take a different approach to argue for efficiency, making use of the fact that all ground states of gapped 1D Hamiltonians are well approximated by MPS~\cite{hastings2007area,landau2013polynomial}, and that the circuit to construct an MPS is in general linear in the chain length $N$.  Even under completely symmetric evolution, any SP state (with an exact MPS representation) can be mapped to a product state by a circuit of depth $O(N)$~\cite{huang2014quantum} by taking the standard circuit to construct the MPS representation of the state and applying its inverse. This circuit will also commute with the symmetry, provided that the MPS tensors are symmetric. Furthermore, this scaling should be optimal as we expect one edge must communicate with the other edge to complete the disentangling map to a product state.  In principle, then, there could exist $O(N)$ length symmetric adiabatic evolutions that map from a trivial phase to a SP phase (and similarly between SP phases). As such, we expect that generic adiabatic evolutions generated by the application of a field for a period scaling polynomially with the system size $O(\poly (N))$ will pass from SP to trivial phase, and these may even be as fast as $O(N)$.

Another interesting and relevant aspect of the efficiency issue is addressed in the work of Dziarmaga and Rams~\cite{dziarmaga2010adiabatic,dziarmaga2010dynamics}, investigating the adiabatic traversal of symmetry-breaking phase transitions in one-dimensional quantum spin models.  In comparison to the uniform application of a global field driving the system through a phase transition, it is found that sweeping a spatially-varying field profile sufficiently slowly across a chain leads to a polynomial speed up in the time required for adiabatic traversal. In addition, the slow sweeping can lead to an exponential suppression in the density of thermal errors. We expect that similar results should apply at least in the case of an abelian symmetry group with maximally noncommutative factor system~\cite{else2013hidden} as there exist locality-preserving mappings between the symmetry broken and SP phases in this case.  This approach may also become natural when trying to implement a full circuit, as we envision sweeping a uniform field sufficiently slowly across a network of SPAQTs arranged to simulate the circuit so as to only address a number of SPAQT gates in parallel at each time step, rather than applying the field simultaneously to the whole complex network (by sweeping we hope to avoid the kind of trade off described in \cite{antonio2013trade}).  Finally, we note that the adiabatic sweep can be run in reverse, which could be advantageous if, as is suggested in Ref.~\cite{dziarmaga2010adiabatic,dziarmaga2010dynamics}, the thermal errors generated by the sweeping field propagate ahead of the phase transition wave front. In this situation, the excitations will be swept away from the edge mode on the SP portion of the chain and into the trivial phase.

\section{Errors and their effects}
\label{sec:Errors}

The SPAQT offers some natural robustness to a variety of errors that could occur in a quantum computation.  In this section, we will survey the various error channels for this scheme, following the outline of the discussion in Bacon \textit{et al.}~\cite{bacon2012adiabatic}. See also \cite{cesare2014adiabatic} for a more thorough analysis of error channels relevant to holonomic adiabatic quantum computation using a many-body ground state.

As our model is technically adiabatic open loop holonomic quantum computation, the result that general holonomic quantum computation can be performed fault-tolerantly~\cite{oreshkov2009fault} are relevant, although the construction in Ref.~\cite{oreshkov2009fault} does not lend itself directly to our framework.  Nonetheless, it shows in principle that a holonomic scheme for universal quantum computation such as ours can be made fault tolerant.  The question of whether we can design such a fault-tolerant construction while preserving the desirable physical properties of our scheme is open. 

As a first requirement, our scheme relies on the existence of robust non-trivial SP phases, for example the Haldane phase~\cite{Kestner2011,janani2014} and topological insulators~\cite{moore2010}, which have been observed in both condensed matter and cold atomic systems.  In particular, we require that SP phases should still exist for small non-symmetric Hamiltonian perturbations and small nonzero temperatures. We note that recent studies have shown that localization through disorder in SP phases can provide some robustness to the fractionalized edge modes~\cite{bahri2013localization}.

Each individual SPAQT is inherently thermodynamically protected from all symmetric errors due to the irreducibility of the edge mode representation within the ground space and the energy gap to the excited states.  In this sense, our encoding is essentially a decoherence free subspace~\cite{lidar1998decoherence,lidar2012review} for symmetric errors. This property can be combined with dynamical decoupling pulses~\cite{viola1998dynamical} implementing all global symmetry transformations in order to symmetrize the noise operators to a certain order in perturbation theory.  Performing such sequences would then provide thermodynamic protection from these now symmetrized errors, as they must act as the identity on the ground space to the same order of approximation. However, this scheme may not suit the adiabatic implementation in practice as the dynamic decoupling requires active application of fast pulse sequences to implement global symmetries throughout the evolution.

Any errors that have the sole effect of changing the energy eigenspace of the chain during the adiabatic evolution should be equivalent to having an excited state (or superposition of excited states) at the end of the computation, where the Hamiltonian consists of purely uniform local fields.
We restrict our attention to the case of a single excited state, as any superposition can be collapsed by measuring each spin in the basis of the field being applied to it. We can understand such an error as causing some of the spins to end up in excited eigenstates of this applied field. Provided that the eigenspaces of the field are nondegenerate, the excited state should transform as some irreducible representation $\varphi: G\rightarrow U(1)$ of the symmetry. If we label the excited state by $\Ket{\varphi}$ then the effect of this error is precisely to implement the gate $W_\varphi$ in place of the $W_\chi$ that would have occurred without the error (there may also be a global phase factor due to the $W_\varphi$ matrices forming a projective representation). Upon measuring all the spins in the trivial phase at the end of the evolution, we can in principle determine what excitation errors have occurred during the computation and furthermore collapse a superposition of such errors into the energy eigenbasis of the applied field.

A subtlety we have overlooked thus far is the possibility of a small accidental, adiabatic deformation away from the desired final state of the applied field.  While this would seem easy to suppress in practice by simply applying a stronger uniform local field it does not cause any change in the intended logic gate so long as the deformation is symmetric. The only effect this may have is to entangle the encoded information at the edge of the SP chain with some spins in the trivial phase near the phase boundary. This could necessitate some operation to disentangle the logical information, again, measuring the spins in the basis of the applied field should suffice with high probability.

\subsection{Protection through delocalisation}

For the remainder of the section, we speculate about some possible fault tolerant properties of the encoding we use at different points in a SP phase. The ground state encoding is associated to a gapless edge mode and it is known to be localised to the edge in the following sense: there is a renormalisation fixed point of the phase in which the information is strictly localised to a single physical site and as we follow a symmetric, adiabatic path this mode spreads out up to the point where it persists across the majority of the chain at a phase transition. This implies that the encoding will possess different degrees of inherent robustness to local errors. As noted by Bacon \textit{et al.}~\cite{bacon2012adiabatic}, at an exact fixed point and at the decoupled end point of the adiabatic evolution in Eq.~\eqref{eq:spqat}, the encoded information is essentially as unprotected as a bare qubit. They propose a solution to this by scheduling the adiabatic evolution to spend a minimal amount of time at the beginning and end of the computation, where the gap is almost constant and the information is unprotected. They go a step further and conjecture that the encoded information is inherently robust to local errors during the middle of the adiabatic evolution where it is maximally delocalised over the bulk of the chain, see Fig.~\ref{fig:chaintransistor}.

We note that even for points in a SP phase that are a constant distance from the fixed point, the encoded information could spread over a constant number of sites that is sufficiently large to protect against errors that act independently on single physical spins. An analysis of the general case is complicated by the fact that, if the parent Hamiltonian consists of commuting terms, then the information lives precisely on the single physical edge spin and is therefore unprotected. Hence any inherently robust encoding must have non-commuting Hamiltonian terms, and analysing the precise properties of such an encoding would be difficult. We further propose that during computation, when no measurements are necessary, one should take advantage of the inherent robustness of points in the phase where the edge mode is spread out. In particular we consider starting and finishing the computation at such points, rather than the points with exactly localised encoded information described by Eq.~\eqref{eq:spqat}. 

\subsection{Nonsymmetric errors}

Finally, we consider the most general nonsymmetric error operators.  For a generic Hamiltonian in a SP phase, there will be a nontrivial dispersion relation, hence, localized errors will propagate across the chain.  We believe that it should still be possible to deal with these errors by globally cooling the system while sweeping a field to implement the computation. In this case the errors can only propagate a certain mean path length determined by the temperature of the memoryless cooling reservoir to which the chain is coupled. Then with high probability the region which could possibly be effected by each error is of constant size in time and space and should be uncorrelated with other errors. Hence we expect that such an error should be correctable by simulating standard fault tolerant circuit constructions such as Ref.~\cite{aharonov2008fault} with the SPAQTs. 

It may be possible to formalize the above analysis by treating the spin chain as weakly coupled to a bath where the open system dynamics can be described by a master equation. In some such cases it has been shown that the light cone of information spread can asymptote to a finite region \cite{descamps2013asymptotically}. Furthermore the presence of some weak disorder in the system could have a similar effect in localizing the excitations caused by errors such that they can be corrected using the procedure described above~\cite{bahri2013localization}.

\section{Conclusion}
\label{sec:Conclusions}
We have argued that material properties of SP phases make them natural systems to use when designing adiabatic quantum transistors, in loose analogy to the use of semiconductor materials in building classical transistors. 
We have proposed an understanding of the operation of an adiabatic transistor in terms of driving a spin chain through a phase transition from a symmetry-protected phase to a trivial symmetric phase. This perspective also extends the understanding of Hamiltonians that lead to adiabatic quantum transistor gates from finely tuned exact models to whole SP phases of matter, thereby further reducing the control requirements of the scheme.
We would particularly like to highlight the fact that the logical transformations implemented by a SPAQT depend only upon symmetry properties that are universal to a whole SP phase.

We further hope that our general approach can be applied to a broad range of situations to characterise useful properties of particular fine tuned parent Hamiltonians in terms of more robust and universal properties of whole quantum phases.

Finally, we put forward the conjecture that our scheme may be adaptable to exploit the inherent protection of 2D topologically ordered surface states of 3D SP or topological bulk materials \cite{burnell2013exactly}, thus achieving inherent fault tolerance of the information encoded into the edge mode.
We also conjecture that our schemes extends, in a natural way, to currently engineerable topological wires with Majorana edge modes~\cite{kitaev2001unpaired,sau2010,mourik2012signatures,churchill2013superconductor} which are fermionic analogs to the (bosonic) SP spin chains studied here. 

\section*{Acknowledgements}
We thank Steven Flammia, Andrew Doherty, Joseph Renes, and Gavin Brennen for helpful comments. DW thanks Beno{\^\i}t Descamps for the suggestion to look at the symmetry conditions as a channel. This research was supported by the ARC via the Centre of Excellence in Engineered Quantum Systems (EQuS), project number CE110001013, and by the U.S. Army Research Office. 

%\bibliography{bibliography}

\appendix

\section{SPAQT with the Haldane phase}
\label{app:haldane}

In this appendix, we analyze the model of Ref.~\cite{renes2011holonomic} as an example of a SPAQT.  Much of the material presented here is a review of the details of Ref.~\cite{renes2011holonomic}, expressed in the language and notation of the present paper for clarity; however, several new results are included.  First, we demonstrate that the two qubit gate of Ref.~\cite{renes2011holonomic} is associated with a non-trivial symmetry protected phase of two spin chains (while this was previously only known to hold for the single qubit gates).  Second, we demonstrate how combining evolutions under several different discrete symmetry groups give rise to a universal gate set.  Together, these facts confirm the claim that the model of Ref.~\cite{renes2011holonomic} falls into our framework. 
%-------------------------------------------------------------------------

\subsection{Encoding in the Haldane phase} \label{appendixd} \label{sec:heischain}

In this section, we review the spin-1 Heisenberg chain proposed in Ref.~\cite{renes2011holonomic} for holonomic quantum  computation. We consider a 1D chain of spin-1 particles that locally interact via a pairwise, symmetric, antiferromagnetic coupling, favoring local anti-alignment of neighboring spins. A two-body, nearest neighbor Hamiltonian which describes such an interaction with full $SO(3)$ rotation symmetry is
\begin{equation}
\label{so3ham}
H^{\text{Haldane}} = J  \sum_i \left( \vec{S}_i\cdot\vec{S}_{i+1} - \beta \left( \vec{S}_i \cdot \vec{S}_{i+1} \right)^2 \right) \,,
\end{equation}
where $J>0$ for an antiferromagnetic coupling.  This Hamiltonian describes the Heisenberg model at $\beta=0$ and the AKLT model at the point $\beta=-\frac{1}{3}$.  These two models lie within a common phase -- the Haldane phase -- corresponding to the range $-1<\beta<1$ in the $SO(3)$ symmetric Hamiltonian \eqref{so3ham}, and characterised by fractionalized spin-$\half$ boundary degrees of freedom throughout this parameter range. This is an instance of a SP phase protected by the $SO(3)$ rotation symmetry, with the boundary spins described by the spin-$\half$ projective representation of $SO(3)$.

The states of the spin-$\half$ boundary degrees of freedom label a fourfold degeneracy (in the thermodynamic limit) in the ground states of the spin chain. To be precise we note that for any finite chain there is a small splitting between the energy eigenvalues of the set of ground states corresponding to the singlet and triplet states of the edge modes. This splitting decays exponentially as the size of the system grows, while the gap to the first excitation converges to a non-zero value. This is due to the general property that correlations decay exponentially in gapped ground states, causing the strength of the interaction between the two edge modes to decay accordingly. 
\begin{figure}[t]
\center
\includegraphics[width=0.95\linewidth]{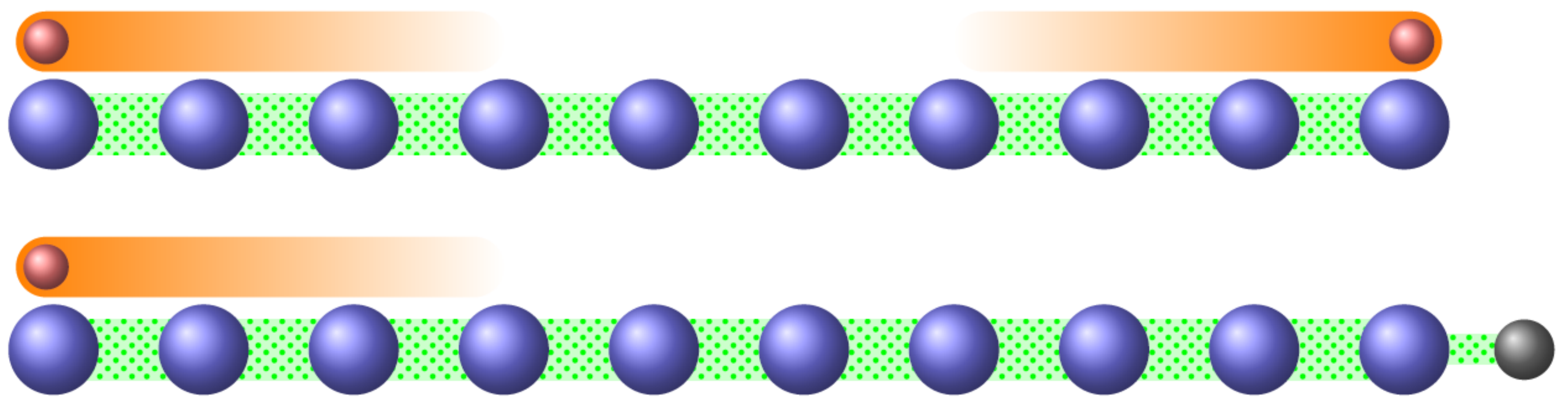}
\caption{A Haldane chain of spin-1 particles with open (top) and and one fixed (bottom) boundary conditions.}
\end{figure}
As described in Sec.~\ref{sec:SPencoding}, we consider coupling one boundary to a real spin-$\half$ which possesses a nontrivial, projective representation of the symmetry group with a cohomology class label, inverse to that of the boundary mode. This effectively purifies that edge mode and removes the fractional degree of freedom. The product of the emergent mode Hilbert space with that of the real spin-$\half$ is now equivalent to a linear representation of the symmetry. Hence, coupling at one boundary breaks the (near) fourfold degeneracy of the ground states down to a twofold degeneracy (which is exact, even for a finite sized system) corresponding to a single spin-$\half$ boundary mode.
\begin{equation}
\label{so3term}
H_{m,n}^{\text{Haldane}} = J  \sum_{i=m}^{n-1} \left( \vec{S}_i\cdot\vec{S}_{i+1} - \beta \left( \vec{S}_i \cdot \vec{S}_{i+1} \right)^2 \right)+J\vec{S}_n\cdot\vec{s}_{n+1}
\end{equation}
The purification of one boundary, effectively fixing that degree of freedom, reduces the dimension of the degenerate ground space to two. We identify the logical Pauli operators on this subspace with global conserved quantities, generated by the symmetries of the Hamiltonian
\begin{align}
\label{eq:consquant}
\Sigma_{n}^{\hat{m}} &=\left( \bigotimes_{j=1}^{n}\exp (i \pi S^{\hat{m}}_j) \right) \otimes  \exp (i \frac{\pi}{2} \sigma ^{\hat{m}} ) \\
&= \bigotimes_{j=1}^{n} \left( I-2 (S_j^{\hat{m}})^2  \right) \otimes  i \sigma ^{\hat{m}} 
\end{align}
These operators form a nontrivial projective representation of $SO(3)$, with the same second cohomology label as the 2D spin-$\half$ representation.  The logical operators of the single encoded qubit are $Z_L=\Sigma_{n}^{\hat{z}}$ and $X_L=\Sigma_{n}^{\hat{x}}$. The encoded spin-$\half$ degree of freedom spanned by the eigenstates of these operators within the degenerate ground state is identified with the state of the gapless boundary mode. This encoding persists throughout the SP phase since it relies only on conserved quantities generated by the symmetries of the whole phase and the ground state degeneracy that is protected by this symmetry.

We note that the Haldane phase can be protected by the Abelian subgroup $D_2=\mathbb{Z}_2\times\mathbb{Z}_2 \subset SO(3)$~\cite{KennedyTasaki}.  We can think of this group $D_2$ as being embedded in the natural $SO(3)$ symmetry, corresponding to a subgroup generated by $\pi$-rotations about two orthogonal, spatial axes.  The relaxation of the symmetry condition allows us to explicitly consider $D_2$ symmetry respecting local fields of the form $(\vec{S}^{\hat{m}})^2$ acting on single spin sites along the three spatial axes which define the embedding of $D_2\subset SO(3)$. For an explicit embedding of $D_2$ generated by $\pi$-rotations about two orthogonal axes $\hat{m}$, $\hat{m}^{\perp}$, we can use the local fields $(\vec{S}^{\hat{m}})^2$, $(\vec{S}^{\hat{m}^{\perp}})^2$ and $(\vec{S}^{\hat{m}\times\hat{m}^{\perp}})^2$ without breaking the symmetry.

In the next section we will use these symmetry respecting fields to generate logical evolutions of the encoded qubits.

\subsection{Single-Qubit Gates}
\label{singlequbitgates}

In this section, we continue reviewing the results of Ref.~\cite{renes2011holonomic}, demonstrating that single-qubit Pauli rotations can be performed by adiabatically decoupling a single spin from the chain while applying a $D_2$-symmetry respecting field to it.  We describe the operation of the gates using the exact Heisenberg Hamiltoniann $\beta=0$ for simplicity, but the arguments are based purely on symmetry arguments and hold equally well if we use any Hamiltonian throughout the Haldane phase.

The qubit encoded in the free edge of the ground state by the $X_L$ and $Z_L$ operators can be manipulated by adiabatically decoupling a single spin from the end of the chain while applying a local field to it. This unitary evolution forces  the decoupled spin into the ground state of the local field operator. For a field aligned along the $\hat{z}$-axis, this evolution is governed by the following time dependent Hamiltonian
\begin{equation}
\label{eq:szhol}
H_n(t)=f(t) J (S_1^{\hat{z}})^2 + g(t) J \vec{S}_1 \cdot \vec{S}_2 + H_{2,n}^{\text{Haldane}}
\end{equation}
where $f$ and $g$ are monotonic functions with: $f(0)=g(T)=0$ and $f(T)=g(0)=1$. Note that the addition of the  $(S_1^{\hat{z}})^2 $ field fixes one axis of the embedding $D_2\subset SO(3)$ to be the $\hat{z}$ axis. To complete a nontrivial closed holonomy with $D_2$ symmetry we apply a local field along another axis, orthogonal to $\hat{z}$. The choice of the second axis specifies the embedding of $D_2 \subset SO(3)$. The particular choice of a field along the $\hat{x}$ axis identifies $D_2\subset SO(3)$ with the group of $\pi$-rotations about the $\hat{x}, \hat{y}, \hat{z}$ axes. The full holonomy is then described by the Hamiltonian
\begin{equation}
\label{1qbitholonomy}
H_n(t)=f_1(t) J (S_1^{\hat{z}})^2 + f_2(t) J (S_1^{\hat{x}})^2 + g(t) J \vec{S}_1 \cdot \vec{S}_2 + H_{2,n}^{\text{Haldane}}
\end{equation}
where $f_1$, $f_2$ and $g$ are smooth functions, piecewise-monotonic on the three time intervals: $[0,T_1]$, $(T_1,T_2]$, $(T_2,T_3]$, with: $f_1(0)=f_1(T_2)=f_1(T_3)=f_2(0)=f_2(T_1)=f_2(T_3)=g(T_1)=g(T_2)=0$ and $f_1(T_1)=f_2(T_2)=g(0)=g(T_3)=1$; see Fig.~\ref{fig:1qbit1}.
This time varying Hamiltonian respects the $D_2$ symmetry throughout the coupling and hence supports the SP phases protected by this symmetry. Consequently the boundary modes persist so long as there is no phase transition in the path of the time dependent Hamiltonian.  (Note that Ref.~\cite{renes2011holonomic} cites strong numerical evidence  that the energy gap remains finite for these evolutions, and hence there is no phase transition.) 
\begin{figure}[t]
\center
\includegraphics[width=0.95\linewidth]{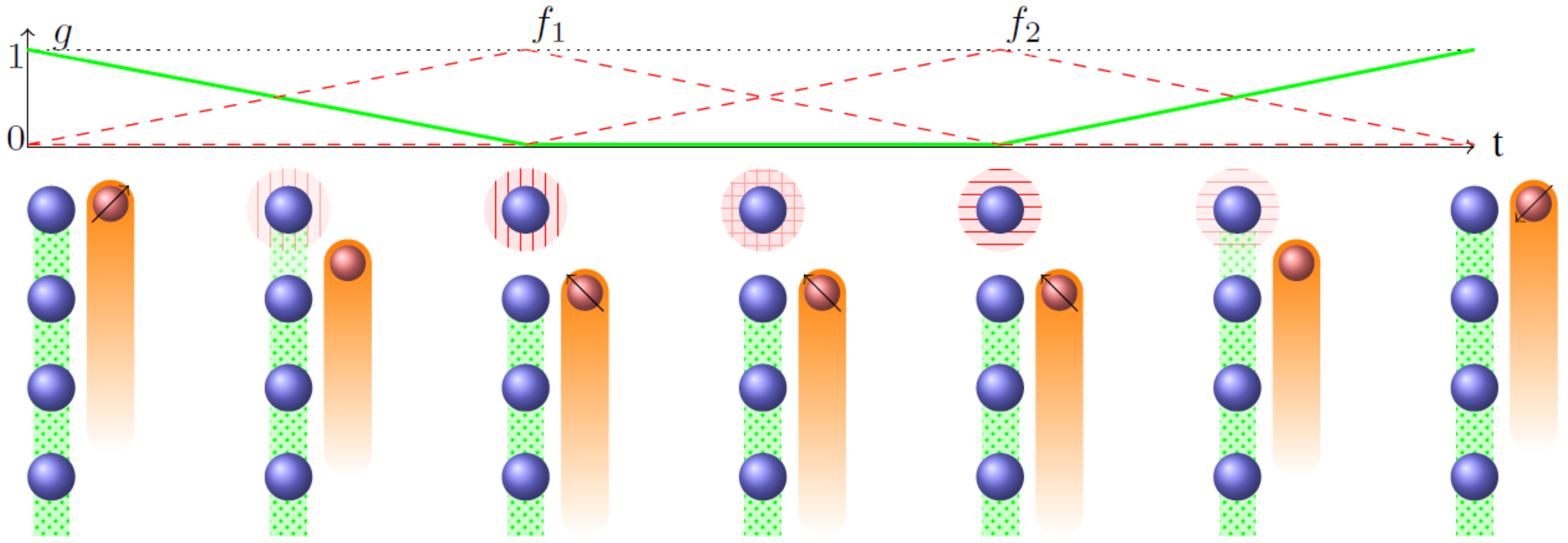}
\caption{The holonomic evolution inducing a single-qubit gate and the coupling strengths throughout the process.}
\label{fig:1qbit1}
\end{figure}
We analyze the action of the holonomy on the encoded spin by making use of the conserved quantities $\Sigma_n^{\hat{x}},\Sigma_n^{\hat{z}}$ generated by the on-site symmetries, which remain constant during the unitary evolution. First we consider the evolution over the interval $[0,T_1]$, as a spin-1 is decoupled from the $n$-chain and the encoded qubit squeezed into a shorter chain of length $(n-1)$. This evolution results in a Pauli $Z_L$ gate on the encoded information.

We fix notation as follows.  A quantum state $|H=0\rangle$ denotes a state in the ground space of $H$, normalised such that this has lowest eigenvalue $0$; for degenerate ground states, additional quantum numbers are used to uniquely specify a state.  For an initial $+1$ eigenstate of $\Sigma_n^{\hat{z}}$, the $\ket{0}_L$ logical state on $n$ spins
\begin{equation}
\ket{\Psi(0)}=\ket{\Sigma_n^{\hat{z}}=1,H_{1,n}^{\text{Haldane}}=0}=\ket{0}_n
\end{equation}
after the adiabatic decoupling becomes
\begin{equation}
\ket{\Psi(T_1)}=\ket{\Sigma_n^{\hat{z}}=1,(S_1^{\hat{z}})^2=0,H_{2,n}^{\text{Haldane}}=0}
\end{equation}
up to a phase factor. This eigenstate represents a spin-1 decoupled from the remaining $(n-1)$ length chain.

To determine the results of this evolution we make use of the conserved quantities on the chain. The ground state of the decoupled spin $\ket{(S_1^{\hat{z}})^2=0}=\ket{S_1^{\hat{z}}=0}$ is a +1 eigenstate of the rotation operator $\exp (i \pi S^{\hat{z}} )$ hence the remaining symmetry operator $\Sigma_{n-1}^{\hat{z}}$ must have eigenvalue +1, so that the value of the total $\Sigma_n^{\hat{z}}$ is conserved. Hence the final state can be written
\begin{align}
\ket{\Psi(T_1)}&=\ket{S^{\hat{z}}=0}\otimes \ket{\Sigma_{n-1}^{\hat{z}}=1,H_{2,n}^{\text{Haldane}}=0}\\
&=\ket{S^{\hat{z}}=0}\otimes\ket{0}_{n-1}
\end{align}
and we see that the encoded $\ket{0}_L$ state is fixed under this evolution. Similarly an initial state $\ket{1}_n$ evolves to ${\ket{S^{\hat{z}}=0}\otimes\ket{1}_{n-1}}$ up to another phase.

Since the evolution fixes the $\ket{0}_L$ and $\ket{1}_L$ states up to possibly different phase factors, it must amount to some rotation about the $\hat{z}$-axis of the Bloch sphere, corresponding to the unitary operator
\begin{equation}
\label{eq:ul}
 U_L=\begin{bmatrix}
  1 & 0  \\
  0 & e^{i\theta} 
 \end{bmatrix}
\end{equation}
up to an irrelevant global phase, and where $\theta \in (-\pi,\pi]$.

To calculate the rotation $\theta$ we consider the evolution of the $X_L$ basis under the decoupling, making use of the fact that $\exp (i \pi S^{\hat{x}} ) \ket{S^{\hat{z}}=0} = - \ket{S^{\hat{z}}=0}$. For the initial +1 eigenstate of $\Sigma_n^{\hat{x}}$, the $\ket{+}_L$  logical state: 
\begin{equation}
\ket{\Psi(0)}=\ket{\Sigma_n^{\hat{x}}=1,H_{1,n}^{\text{Haldane}}=0}=\ket{+}_n
\end{equation}
after decoupling becomes
\begin{align}
\label{eq:+evo}
\ket{\Psi(T_1)}
&= \ket{S^{\hat{z}}=0} \otimes \ket{\Sigma_{n-1}^{\hat{x}}=-1,H_{2,n}^{\text{Haldane}}=0} \nonumber \\
&= \ket{S^{\hat{z}}=0} \otimes \ket{-}_{n-1}
\end{align}
up to a phase factor $e^{i \gamma}$.

We can fully determine the evolution by comparing the two different descriptions in Eq. \eqref{eq:ul} and Eq. \eqref{eq:+evo}. This comparison implies that $U_L \ket{+}_L = \ket{0}_L + e^{i\theta} \ket{1}_L = e^{i \gamma} \ket{-}_L$, which specifies $e^{i\theta}=-1$, hence the rotation is $\theta=\pi$ about the $z$-axis:
\begin{equation}
 U_L=\begin{bmatrix}
  1 & 0  \\
  0 & -1
 \end{bmatrix} \,.
\end{equation}
Hence the full evolution of the logical qubit described by Eq. \eqref{eq:szhol} has been specified to be a $\pi$-rotation about the $\hat{z}$ axis of the Bloch sphere that takes $\ket{+}_L \mapsto \ket{-}_L$ and $\ket{-}_L \mapsto \ket{+}_L$.

An important point is that this whole argument works just as well when we replace $\hat{z}$ and $\hat{x}$ by any pair of orthogonal axes $\hat{m}$ and $\hat{m}_\perp$ and the field $(S_1^{\hat{z}})^2\mapsto(S_1^{\hat{m}})^2$, which would lead to a $\pi$-rotation about the $\hat{m}$ axis along which the local field is aligned.

Since this evolution is unitary, it can equally well be run in reverse, effectively recoupling a spin, initially in the ground state of a local field, to the chain. This increases the length of the chain and reverses the logical evolution of the decoupling process. Hence the recoupling process also causes a $\pi$-rotation about the axis $\hat{m}$ along which the local field is aligned.

Equipped with a description of the decoupling and recoupling processes, we can determine the full evolution described by Eq.~\eqref{1qbitholonomy}. We see that this corresponds to first a $\pi$-rotation about the $\hat{z}$ axis as a spin is decoupled over the period $[T_0,T_1]$, followed by the adiabatic realignment of the local boundary field from the $\hat{z}$ axis to the $\hat{x}$ axis during $[T_1,T_2]$ and finally another $\pi$-rotation about the $\hat{x}$ axis as the spin is recoupled from  $[T_2,T_3]$. Hence the total evolution associated to the holonomy is just a $\pi$-rotation about the $\hat{y}=\hat{z}\times \hat{x}$ axis of the logical Bloch sphere.  

\begin{widetext}

\subsection{A symmetry-protected two-qubit gate}
\label{sec:2qbit}
\label{sec:evo2}

In this section, we review the entangling gate of Ref.~\cite{renes2011holonomic} between the qubits encoded in two separate chains.  This gate uses a similar procedure to the single-qubit evolution, but this time by coupling a pair of physical spins, one from each chain, as they are simultaneously decoupled from their respective chains.  
We then present several new results.  We first calculate the symmetry group (which we call $G_2$) of the two-chain interaction Hamiltonian in Sec.~\ref{sec:symgp}, and investigate the representations (including projective representations) of this group in Sec.~\ref{sec:repsofG2}.  With this symmetry group, we then prove the main result of the appendix, Theorem~\ref{thm:1} in Sec.~\ref{sec:pfnontriv}, that this symmetry group protects a SP phase and hence also protects the two-qubit gate.

\subsubsection{A two-qubit gate}

In this section, we review the operation of the two-qubit entangling gate of Ref.~\cite{renes2011holonomic}.

To simulate more complicated quantum circuits involving multiple qubits we need to be able to generate entanglement between encoded qubits. We do this in a similar way to the single-qubit gates, but this time by brining together two spin chains ($A$ and $B$) and applying two-body interaction terms to a pair of spins at the edge of the chains. We use the particular choice of coupling Hamiltonian $W^{AB}$ introduced in Ref.~\cite{renes2011holonomic} which yields a controlled-$Z$ gate (abbreviated as the `CZ gate') followed by local Pauli operators (by-products) on each individual chain as we decouple the pair of end spins. 
\begin{equation}
\label{2qbit}
H_n(t)=f(t) J\ W^{AB} + g(t) J \left( \vec{S}^{A}_1 \cdot \vec{S}^{A}_2 + \vec{S}^{B}_1 \cdot \vec{S}^{B}_2 \right) + H_{2,n}^{A,\text{Haldane}}+H_{2,n}^{B,\text{Haldane}}
\end{equation}
where $f(0)=g(T)=0$, $f(T)=g(0)=1$ and the symmetric coupling $W^{AB}$ is given by
\begin{equation}
W^{AB} = \left[ (S_1^{\hat{x}})^2-(S_1^{\hat{y}})^2 \right] _A \otimes \left[ S_1^{\hat{z}}\right] _B + \left[ S_1^{\hat{z}} \right] _A  \otimes \left[ (S_1^{\hat{x}})^2-(S_1^{\hat{y}})^2 \right] _B
\end{equation}
\begin{figure}[t]
\center
\includegraphics[width=0.5\linewidth]{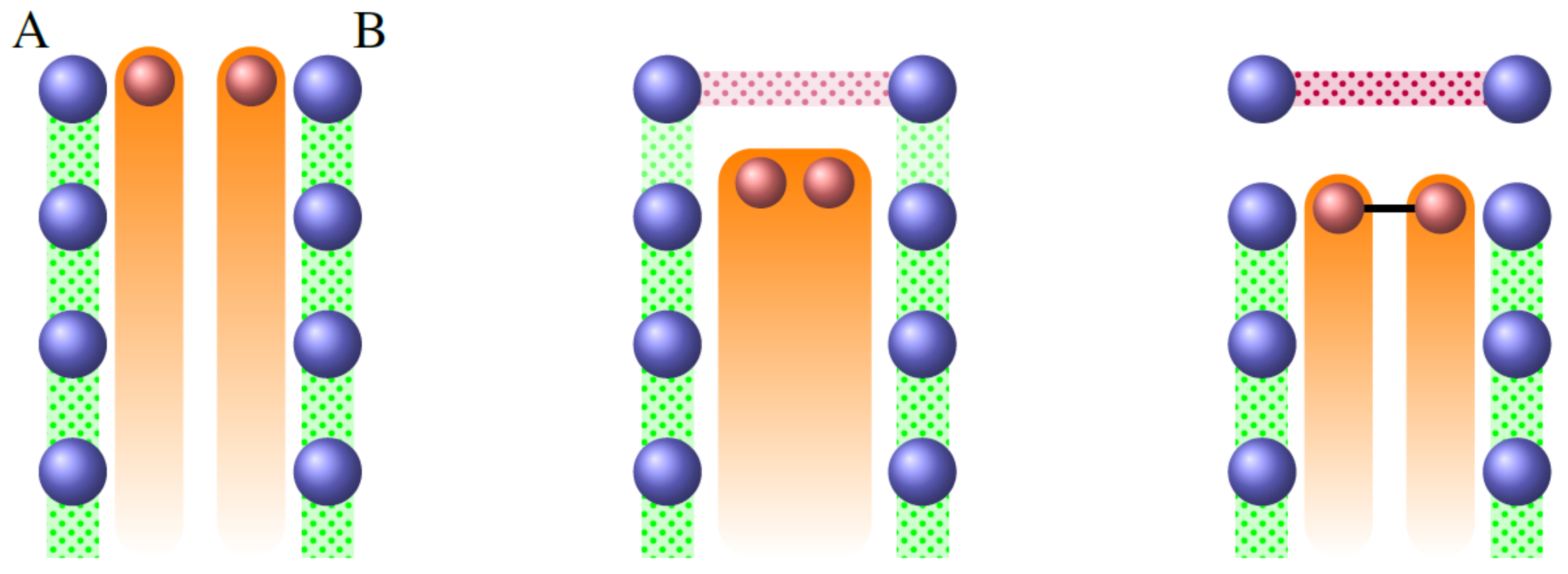}
\caption{The holonomic evolution which induces a an entangling gate on two encoded qubits.}
\label{fig:2qbit1}
\end{figure}

To calculate the evolution of the encoded qubits under the Hamiltonian~\eqref{2qbit} we make use of similar symmetry arguments to those for the single-qubit gate. For this purpose, the symmetry operators
of the interaction term $W^{AB}$ and the conserved operators generated by them upon the full two chains are instrumental. The state of the decoupled end spins determines the evolution of the remaining chains via the conserved quantities. The $W^{AB}$ coupling has the unique groundstate 
\begin{align}
\ket{\xi}=&\frac{1}{2}({-}\ket{1}\ket{1}+\ket{1}\ket{{-}1}+\ket{{-}1}\ket{1}+\ket{{-}1}\ket{{-}1}) 
\end{align}
note that this groundstate is invariant under the full symmetry group of the two chain interaction and hence does not induce any symmetry breaking in the state of the chain. The invariance of the groundstate precisely corresponds to it being an eigenstate of all the symmetry operators on the pair of decoupled spins.

The particular eigenvalues of $\ket{\xi}$ given in Table \ref{tab:xieig}, under symmetries
of $W^{AB}$ which generate conserved quantities on the pair of chains, will determine the evolution of the encoded qubits caused by the decoupling process. The total evolution of the encoded qubits caused by the decoupling in Eq. \eqref{2qbit} turns out to be a CZ gate followed by Pauli $\sigma^{\hat{x}}$ operators on each qubit, which is a nontrivial entangling gate.

To calculate the evolution we first consider $\pi$-rotations about each $\hat{z}$ axis $(R^{\hat{z}},1)$ and $(1,R^{\hat{z}})$. For an initial state in the combined $S^{\hat{z}}$ product basis, $\ket{\epsilon_1}^A_n\ket{\epsilon_2}^B_n$ where $\epsilon_1,\epsilon_2\in \{ 0, 1 \}$, we have
\begin{align}
\ket{\Psi^{\epsilon_1 \epsilon_2}(0)}&= \left| \Sigma_{n}^{\hat{z}}\otimes1 =(-1)^{\epsilon_1},1\otimes\Sigma_{n}^{\hat{z}}=(-1)^{\epsilon_2},H_{1,n}^{A,\text{Haldane}}=0,H_{1,n}^{B,\text{Haldane}}=0\right> \\
&= \ket{\epsilon_1}^A_n\ket{\epsilon_2}^B_n \nonumber
\end{align}
which becomes, after the decoupling
\begin{align}
\label{eq:xizz}
\ket{\Psi^{\epsilon_1' \epsilon_2'}(T)}&= \left| W^{AB}=0,\Sigma_{n}^{\hat{z}}\otimes1=(-1)^{\epsilon_1},1\otimes\Sigma_{n}^{\hat{z}}=(-1)^{\epsilon_2},H_{2,n}^{A,\text{Haldane}}=0,H_{2,n}^{B,\text{Haldane}}=0\right> \nonumber \\
&=\ket{\xi}\otimes \ket{\Sigma_{n}^{\hat{z}}={-}({-}1)^{\epsilon_1},H_{2,n}^{A,\text{Haldane}}=0} \otimes \ket{\Sigma_{n}^{\hat{z}}={-}({-}1)^{\epsilon_2},H_{2,n}^{B,\text{Haldane}}=0} \nonumber \\
&= \ket{\xi} \otimes \ket{\epsilon_1{+}1}^A_{n-1} \otimes \ket{\epsilon_2 {+}1}^B_{n-1}
\end{align}
up to an unknown phase $\theta_{\epsilon_1 \epsilon_2}$, where the addition inside these kets is mod 2. To determine the state in Eq. \eqref{eq:xizz} we have used the $-1$ eigenvalue of $\ket{\xi}$ under the $(R^{\hat{z}},1)$ and $(1,R^{\hat{z}})$ rotations on the decoupled spins.
\begin{table}[t]
\centering
\begin{tabular}{| c | c | c |}
\hline
$W^{AB}$ Symmetry & $|\xi\rangle$ Eigenvalue & Corresponding Conserved Quantity \\
\hline
$(\sqrt{R^{\hat{z}}},R^{\hat{x}})$ & $i$ & $\sqrt{\Sigma_{n}^{\hat{z}}}\otimes\Sigma_{n}^{\hat{x}}$ \\
$(R^{\hat{u}},R^{\hat{u}})$ & $1$ & $\Sigma_{n}^{\hat{u}}\otimes\Sigma_{n}^{\hat{u}}$ \\
$(R^{\hat{v}},R^{\hat{v}})$ & $1$ & $\Sigma_{n}^{\hat{v}}\otimes\Sigma_{n}^{\hat{v}}$ \\
$(R^{\hat{z}},1)$ & $-1$ & $\Sigma_{n}^{\hat{z}}\otimes1$ \\
$(1,R^{\hat{z}})$ & $-1$ & $1\otimes\Sigma_{n}^{\hat{z}}$ \\
\hline
\end{tabular}
\caption{The eigenvalues of $|\xi\rangle$ for various different symmetry operators.}
\label{tab:xieig}
\end{table}
Hence the logical evolution must take the form
\begin{equation}
\label{eq:uab2}
 U^{AB}=\begin{bmatrix}
  0 & 0 & 0 & \theta_{11} \\
  0 & 0 & \theta_{10} & 0 \\
  0 & \theta_{01} & 0 & 0 \\
  \theta_{00} & 0 & 0 & 0 
 \end{bmatrix}
\end{equation}
for the unknown phases $\theta_{11}, \theta_{10}, \theta_{01}, \theta_{00}$ defined above. 

To specify the constants $\theta_{\epsilon_1 \epsilon_2}$ we consider the reducible projective representations of the conserved quantities listed in Table \ref{tab:xieig}, and match their eigenvectors to the respective eigenvalues of the irreducible projective representations shown in Table \ref{tab:eigs}. This corresponds to identifying the logical states encoded in the degenerate ground space by the operators in Eq. \eqref{eq:consquant} with the state of the edge mode. The action of the symmetries within the degenerate ground space is then described by the irreducible projective representation on the boundary mode. 

We define a set of states in the degenerate ground space of the two chains
\begin{equation}
\ket{\epsilon_u,\epsilon_v}_n=\left| \Sigma_{n}^{\hat{u}}\otimes\Sigma_{n}^{\hat{u}}=(-1)^{\epsilon_u},\Sigma_{n}^{\hat{v}}\otimes\Sigma_{n}^{\hat{v}}=(-1)^{\epsilon_v},H^{A}_{n}=0,H^{B}_{n}=0\right>\,.
\end{equation}
The $+1$ eigenvalue of the edge state $\ket{\xi}$ under the rotations $(R^{\hat{u}},R^{\hat{u}})$ and $(R^{\hat{v}},R^{\hat{v}})$ specify the evolution of the spin chain initialized in this state to be
\begin{equation}
\ket{\epsilon_u,\epsilon_v}_n\rightarrow\ket{\xi}\otimes\ket{\epsilon_u,\epsilon_v}_{n-1}
\end{equation}
and so we have
\begin{align*}
&\ket{\epsilon_u=1,\epsilon_v=1} \mapsto \ket{1}\ket{0}+\ket{0}\ket{1} \\
&\ket{\epsilon_u=1,\epsilon_v=0} \mapsto \ket{0}\ket{0} +i \ket{1}\ket{1} \\
&\ket{\epsilon_u=0,\epsilon_v=1} \mapsto \ket{0}\ket{0} -i \ket{1}\ket{1} \\
&\ket{\epsilon_u=0,\epsilon_v=0} \mapsto \ket{1}\ket{0}-\ket{0}\ket{1} 
\end{align*}
To this end we will compare the action of $U^{AB}$ on the states encoded at the boundary to the effect of the adiabatic evolution on the full spin chains, allowing us to determine the unknown constants.

The adiabatic evolution takes the set of initial states
\begin{equation} 
\ket{\Psi^{\epsilon_u\epsilon_v}(0)}=\left| \Sigma_{n}^{\hat{u}}\otimes\Sigma_{n}^{\hat{u}}=(-1)^{\epsilon_u},\Sigma_{n}^{\hat{v}}\otimes\Sigma_{n}^{\hat{v}}=(-1)^{\epsilon_v},H^{A}_{n}=0,H^{B}_{n}=0 \right>
\end{equation}
(which are possible since the two operators $\Sigma_{n}^{\hat{u}}\otimes\Sigma_{n}^{\hat{u}}$ and ${\Sigma_{n}^{\hat{v}}\otimes\Sigma_{n}^{\hat{v}}}$ commute)
to the final states
\begin{equation} 
\ket{\Psi^{\epsilon_u\epsilon_v}(T)}=\ket{\xi}\otimes \left|\Sigma_{n}^{\hat{u}}\otimes\Sigma_{n}^{\hat{u}}=(-1)^{\epsilon_u}, \Sigma_{n}^{\hat{v}}\otimes\Sigma_{n}^{\hat{v}}=(-1)^{\epsilon_v},H^{A}_{n-1}=0,H^{B}_{n-1}=0\right>
\end{equation}
since $|\xi\rangle$ has the eigenvalue 1 under the rotations $(R^{\hat{u}},R^{\hat{u}})$ and $(R^{\hat{v}},R^{\hat{v}})$ on the decoupled spins, effectively fixing the $\ket{\epsilon_u,\epsilon_v}_L$ logical states up to a set of phase shifts $\phi_{\epsilon_u\epsilon_v}$.

We focus our attention on the (unnormalized) state $(\ket{1}\ket{0}+\ket{0}\ket{1})$, the joint $-1$ eigenstate of of $i\sigma^{\hat{u}}\otimes i\sigma^{\hat{u}}$ and $i\sigma^{\hat{v}}\otimes i\sigma^{\hat{v}}$, we have
\begin{equation}
U^{AB}\left(\ket{1}\ket{0}+\ket{0}\ket{1}\right) = \theta_{10}\ket{0}\ket{1} + \theta_{01}\ket{1}\ket{0}
\end{equation}
which must agree with the evolution of $\ket{\epsilon_u=1,\epsilon_v=1}$ that merely accumulates a phase shift $\phi_{11}$. Hence after the evolution we have: $\phi_{11}(\ket{1}\ket{0}+\ket{0}\ket{1})=\left( \theta_{10}\ket{0}\ket{1} + \theta_{01}\ket{1}\ket{0} \right)$, which requires that $\theta_{10}=\theta_{01}$.

Similarly we consider the evolution of $(\ket{0}\ket{0}+i\ket{1}\ket{1})$,  the -1 eigenstate of of $i\sigma^{\hat{u}}\otimes i\sigma^{\hat{u}}$ and +1 eigenstate of $i\sigma^{\hat{v}}\otimes i\sigma^{\hat{v}}$
\begin{equation}
U^{AB}\left(\ket{0}\ket{0}+i\ket{1}\ket{1}\right) = \theta_{00}\ket{1}\ket{1} +i\theta_{11}\ket{0}\ket{0}
\end{equation}
which must agree with the evolution of the state $\ket{\epsilon_u=1,\epsilon_v=0}_L$ encoded in the spin chains. Hence the equality: $\phi_{11}\left(\ket{0}\ket{0}+i\ket{1}\ket{1}\right) = \left( \theta_{00}\ket{1}\ket{1} +i\theta_{11}\ket{0}\ket{0} \right)$, which specifies $\theta_{00}=-\theta_{11}$. 

\begin{table*}[t]
\centering
\begin{tabular}{| c | c | c |}
\hline
Conserved quantity & Projective representation & eigenvectors grouped by eigenvalue \\
\hline
$\Sigma_{n}^{\hat{u}}\otimes\Sigma_{n}^{\hat{u}}$ & $i\sigma^{\hat{u}}\otimes i\sigma^{\hat{u}}$ & $ \underbrace{|0\rangle|0\rangle+i|1\rangle|1\rangle,\ |1\rangle|0\rangle+|0\rangle|1\rangle}_{\text{eigenvalue: }-1},\ \underbrace{ |0\rangle|0\rangle-i|1\rangle|1\rangle,\ |1\rangle|0\rangle-|0\rangle|1\rangle}_{+1}$ \\
$\Sigma_{n}^{\hat{v}}\otimes\Sigma_{n}^{\hat{v}}$ & $i\sigma^{\hat{v}}\otimes i\sigma^{\hat{v}}$ & $  \underbrace{|0\rangle|0\rangle-i|1\rangle|1\rangle,\ |1\rangle|0\rangle+|0\rangle|1\rangle}_{-1},\ \underbrace{ |0\rangle|0\rangle+i|1\rangle|1\rangle,\ |1\rangle|0\rangle-|0\rangle|1\rangle}_{+1}$  \\
$\sqrt{\Sigma_{n}^{\hat{z}}}\otimes\Sigma_{n}^{\hat{x}}$ & $\frac{1}{\sqrt{2}} (I+i\sigma^{\hat{z}})\otimes i\sigma^{\hat{x}}$ & $\underbrace{|1\rangle|1\rangle-|1\rangle|0\rangle}_{e^{-i\frac{3\pi}{4}}},\ \underbrace{|1\rangle|1\rangle+|1\rangle|0\rangle}_{e^{i\frac{\pi}{4}}},\ \underbrace{|0\rangle|1\rangle-|0\rangle|0\rangle}_{e^{-i\frac{\pi}{4}}},\ \underbrace{|0\rangle|1\rangle+|0\rangle|0\rangle}_{e^{i\frac{3\pi}{4}}} $ \\
\hline
\end{tabular}
\caption{The eigenvectors of the projective representations of various symmetry operators.}
\label{tab:eigs}
\end{table*}

Finally we consider the $e^{-i\frac{3\pi}{4}}$ eigenstate of $\left[ \frac{1}{\sqrt{2}} (I+i\sigma^{\hat{z}})\otimes i\sigma^{\hat{x}} \right]$, $(\ket{1}\ket{1}-\ket{1}\ket{0})$ which evolves to
\begin{equation}
U^{AB}\left(\ket{1}\ket{1}-\ket{1}\ket{0}\right) = \theta_{11}\ket{0}\ket{0} - \theta_{10}\ket{0}\ket{1}
\end{equation}
and the corresponding encoded state
\begin{equation}
\ket{\Psi(0)}= \left| \sqrt{\Sigma_{n}^{\hat{z}}}\otimes\Sigma_{n}^{\hat{x}}=e^{-i\frac{3\pi}{4}},H_{1,n}^{A,\text{Haldane}}=0,H_{1,n}^{B,\text{Haldane}}=0\right> 
\end{equation}
evolves to
\begin{equation}
\ket{\Psi(T)}= \ket{\xi} \otimes \left| \sqrt{\Sigma_{n-1}^{\hat{z}}}\otimes\Sigma_{n-1}^{\hat{x}}=e^{i\frac{3\pi}{4}},H_{2,n}^{A,\text{Haldane}}=0,H_{2,n}^{B,\text{Haldane}}=0\right>
\end{equation}
up to a multiplicative phase $\phi$, due to the conservation of $\sqrt{\Sigma_{n}^{\hat{u}}}\otimes\Sigma_{n}^{\hat{x}}$ and the $i$-eigenvalue of $\ket{\xi}$ under the symmetry operator $(\sqrt{R^{\hat{z}}},R^{\hat{x}})$ on the decoupled spins. For these to agree requires that: $(\theta_{11}\ket{0}\ket{0} - \theta_{10}\ket{0}\ket{1})=\phi(\ket{0}\ket{1}+\ket{0}\ket{0})$ and hence $\theta_{11}=-\theta_{10}$ which specifies $U^{AB}$ up to an irrelevant global phase.

\[
 U^{AB}=\begin{bmatrix}
  0 & 0 & 0 & {-}1 \\
  0 & 0 & 1 & 0 \\
  0 & 1 & 0 & 0 \\
  1 & 0 & 0 & 0 
 \end{bmatrix}= (\sigma^{\hat{x}}\otimes \sigma^{\hat{x}}) \text{CZ}
\]
which constitutes a CZ gate followed by a simultaneous Pauli $\sigma^{\hat{x}}$ operator on each of the encoded qubits. To complete this holonomy we could consider undoing the $\sigma^{\hat{x}}$ operators on each chain using the reverse of the evolution described in Eq. \eqref{eq:szhol}.

We have seen, in this section and the previous one, how adiabatic holnomic evolutions of the spin chains can cause unitary logical evolutions of the qubits encoded within their degenerate ground states. In the next section we will look more closely at the symmetry group of the two chain interaction $G_2$.

%---------------------------------------

\subsubsection{Symmetry Group of the Two-Chain Interaction} \label{sec:symgp}

In this section, we will examine the structure of the symmetry group $G_2$ of the two-qubit coupling Hamiltonian $W^{AB}$ in detail. This group is important since it will determine whether or not the two-qubit gate is symmetry protected. We will determine the full set of elements within this group and use this description to identify it with one of the isomorphism classes of all size 16 groups, specifically the class labeled by $D_2\rtimes\mathbb{Z}_4$. 

The symmetries of $W^{AB}$ that were explicitly used in the calculation of the two-qubit gate are given in Table \ref{tab:xieig}. From these we can see that the symmetry group $G_2$ consists of a discrete set of joint rotations of each pair of spins from the two chains. We have found that the set of symmetries listed in Table \ref{tab:xieig} are not independent, and the group can be generated by the rotations
$(R^{\hat{u}},R^{\hat{u}})$, $(R^{\hat{v}},R^{\hat{v}})$, and $(\sqrt{R^{\hat{z}}},R^{\hat{x}})$,
where $\hat{u}=\frac{1}{\sqrt{2}}(\hat{x}+\hat{y})$ and $\hat{v}=\frac{1}{\sqrt{2}}(\hat{x}-\hat{y})$.
We can decompose the rotation $R^{\hat{x}}= \sqrt{R^{\hat{z}}}\ \cdot\ R^{\hat{u}}$ which leads us to conclude that the element $(R^{\hat{v}},R^{\hat{v}})$ can be written as a product of the other two. Hence the group has only two independent generators, depicted in Figure \ref{fig:2sym2}. We have written out these three redundant generators since they allow us to more easily identify this rotation group with the semidirect product group $D_2 \rtimes \mathbb{Z}_4$.

\begin{figure}[ht]
\centering
\includegraphics[width=8cm]{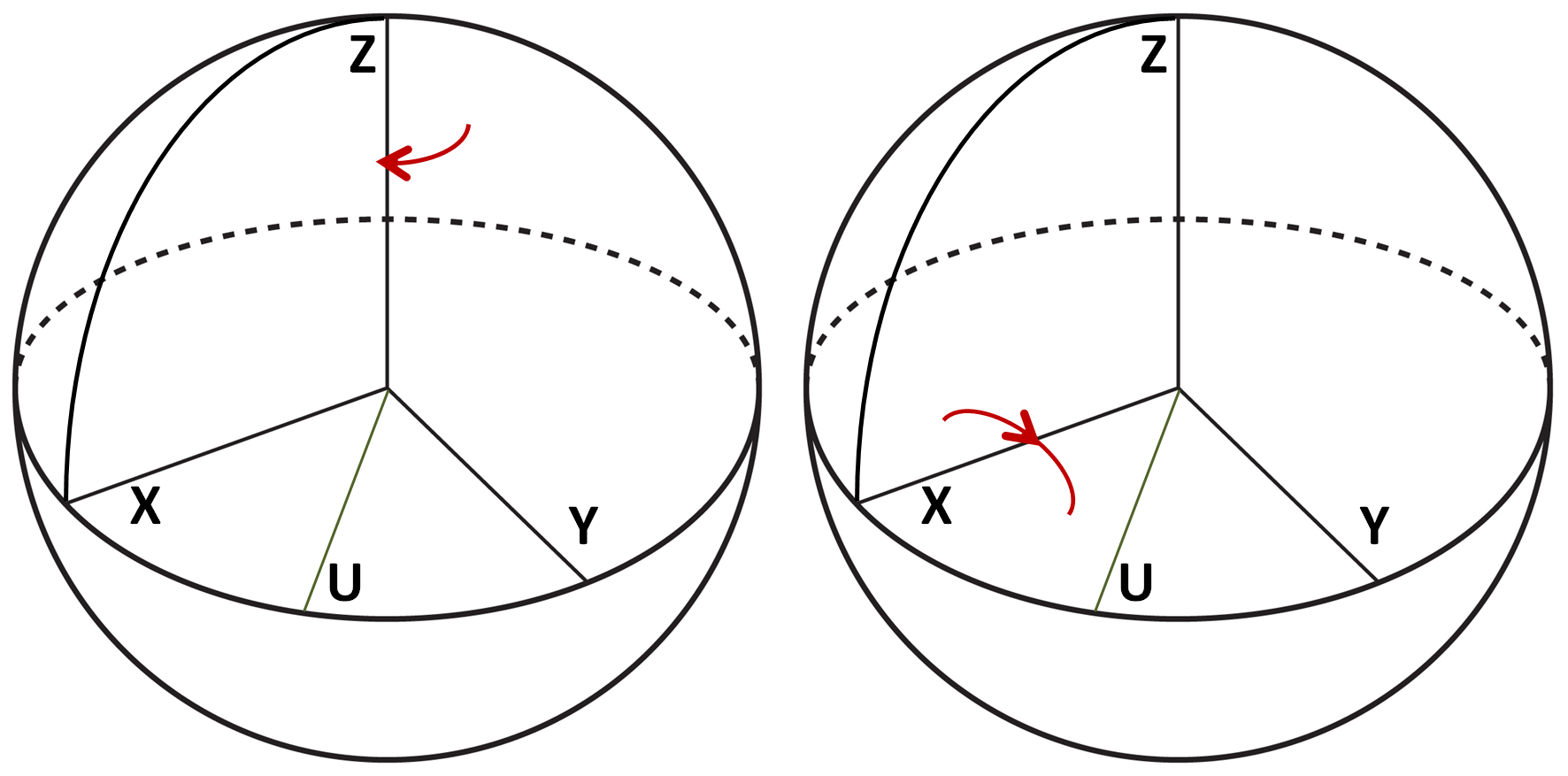} \qquad
\includegraphics[width=8cm]{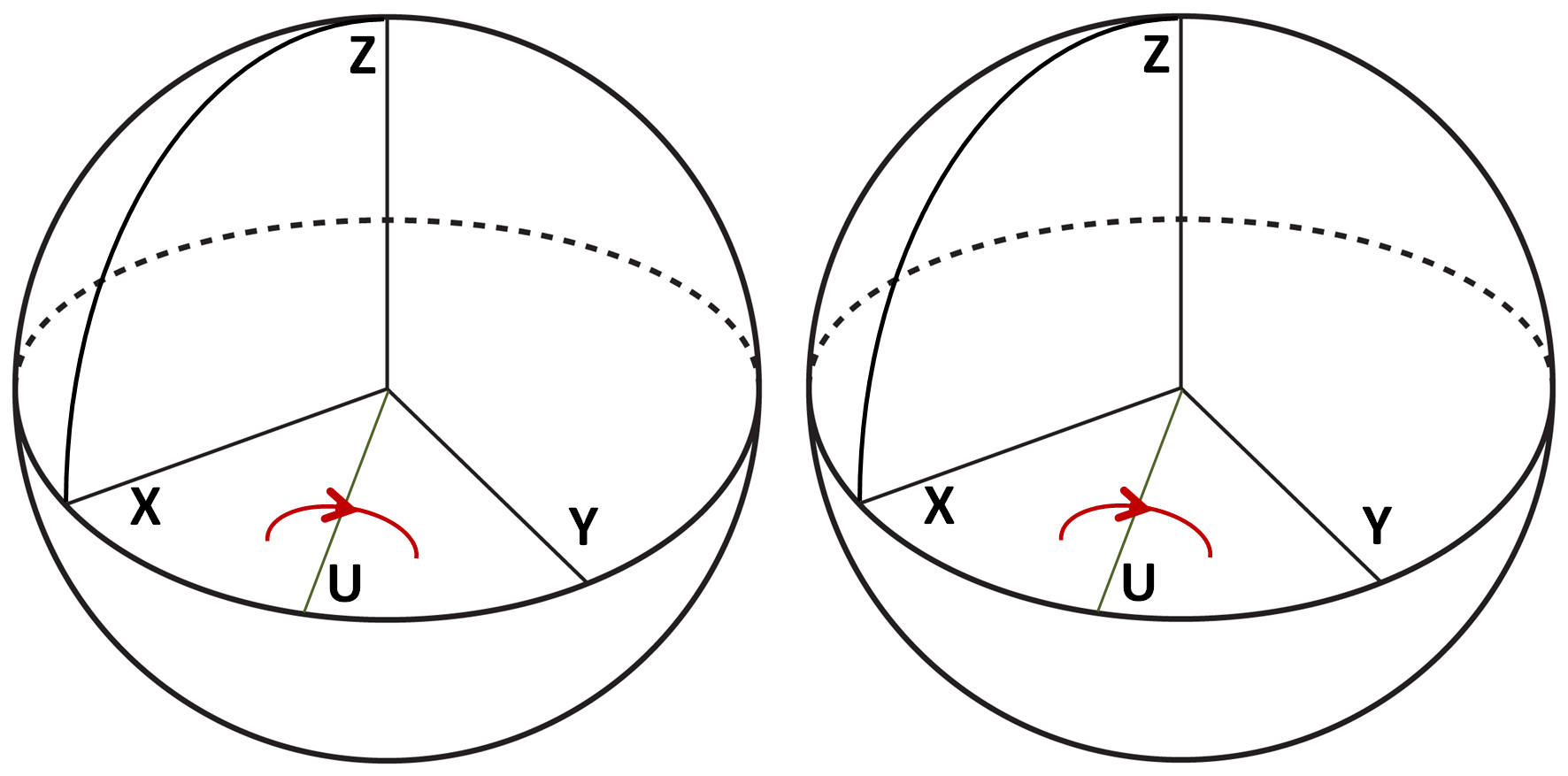}
\caption{The $(\sqrt{R^{\hat{z}}},R^{\hat{x}})$ rotation (left) and the $ (R^{\hat{u}},R^{\hat{u}})$ rotation (right) generating the symmetry group.}
\label{fig:2sym2}
\end{figure}

As a set, $D_2\rtimes\mathbb{Z}_4$ is made up of the same elements as the direct product $ D_2  \times \mathbb{Z}_4$  but possesses a different multiplication rule and hence a different group structure, given by
\begin{equation}
(n_1,h_1)\times (n_2,h_2)= \left(n_1 (h_1 n_2 h_1^{-1} ), h_1h_2\right), \qquad
n_1,n_2 \in  D_2 , \quad h_1,h_2\in \mathbb{Z}_4 
\end{equation}
Under this multiplication rule, the semidirect product group is non-Abelian even though it is built from Abelian components. 

To see how the group $G_2$ has the structure of $D_2\rtimes \mathbb{Z}_4$, we note that the subgroup generated by $(\sqrt{R^{\hat{z}}},R^{\hat{x}})$ is isomorphic to $\mathbb{Z}_4$, and that generated by $(R^{\hat{u}},R^{\hat{u}})$ is isomorphic to $\mathbb{Z}_2$. It is the interaction of these two terms via multiplication that produces $(R^{\hat{v}},R^{\hat{v}})$ which, along with $(R^{\hat{u}},R^{\hat{u}})$, generates $D_2$. It is the non-commutativity of the rotations that generates the non-Abelian structure of the semidirect product $D_2\rtimes \mathbb{Z}_4$ upon combining the two subgroups.

A useful description of $D_2\rtimes\mathbb{Z}_4$ is the presentation in terms of its generators and their relations
\begin{equation}
\label{eq:genrel}
\left[ \alpha, \beta | \alpha^4=\beta^2=1 , \alpha\beta=(\alpha \beta)^2 \beta \alpha^3 \right]
\end{equation}
this notation means the set of all products of $\alpha$ and $\beta$ where these two elements satisfy the relations on the right of Eq. \eqref{eq:genrel}. Using this description we have identified $\alpha \mapsto (\sqrt{R^{\hat{z}}},R^{\hat{x}})$ and $\beta \mapsto (R^{\hat{u}},R^{\hat{u}})$, which specifies the correspondence for all the other group elements and establishes the isomorphism $G_2 \cong D_2\rtimes\mathbb{Z}_4$.

To make this identification, we made use of the derived subgroup $G_2'=[G_2,G_2]:=\left\{ g_1 g_2 g_1^{-1} g_2^{-1} \middle| g_1,g_2 \in G_2\right\}$.  Note $G_2'$ is nontrivial in this case, as a consequence of the group's non-Abelian structure, and provides us with a useful tool as it will allow us to uniquely determine the correspondence between a nontrivial element from each of the different descriptions.  This is because it consists of only two elements, given by $G_2'= \{1, (\alpha \beta)^2 \alpha^2 \} \cong  \{ 1,  (R^{\hat{z}},R^{\hat{z}})\}$ allowing us to identify $(\alpha \beta)^2 \alpha^2 \cong(R^{\hat{z}},R^{\hat{z}})$ and pin down the exact structure.

Another useful tool is the center subgroup $Z(G_2)$, consisting of elements which commute with the whole group. For $G_2$ it is given by $Z(G_2)=\{1, \alpha^2, (\alpha\beta)^2,(\alpha\beta)^2\alpha^2 \} \cong \{ 1, (R^{\hat{z}},1),(1,R^{\hat{z}}), (R^{\hat{z}},R^{\hat{z}})\}$, i.e., with the structure $\mathbb{Z}_2 \times \mathbb{Z}_2$. We will harness the commutative property of this subgroup in the following to prove that a particular projective representation of $G_2$ is nontrivial.

\end{widetext}

\subsubsection{The Symmetry Group Protects a nontrivial SP Phase}\label{sec:pfnontriv}\label{sec:repsofG2}

We will now show that $G_2 \cong D_2\rtimes\mathbb{Z}_4$ has nontrivial projective representations, and therefore protects a nontrivial SP phase.  We will prove this by considering its action on a pair of spin-$1/2$ particles, and explicitly demonstrate its nontriviality.  To be precise, the symmetry group $G_2\cong D_2\rtimes\mathbb{Z}_4$ is an embedding of the semidirect product group $D_2\rtimes\mathbb{Z}_4$ into two copies of the 3D rotation group  $G_2 \subset SO(3)\times SO(3)$. We have used this identification to construct the canonical linear representation of $G_2$ on a pair of spin-1 particles by mapping each rotation to its corresponding spin-1 rotation operator.
Because the symmetry group $G_2$ is generated by the two rotations, $(\sqrt{R^{\hat{z}}},R^{\hat{x}})$ and $(R^{\hat{u}},R^{\hat{u}})$, we need only define the operators that represent each of these rotations to uniquely specify a representation. These are given by
\begin{align}
(\sqrt{R^{\hat{z}}},R^{\hat{x}})&\mapsto \exp (i\frac{\pi}{2} S^{\hat{z}}) \otimes \exp (i\pi S^{\hat{x}}), \nonumber \\ (R^{\hat{u}},R^{\hat{u}}) &\mapsto \exp (i\pi S^{\hat{u}}) \otimes \exp (i\pi S^{\hat{u}}) 
\end{align}
which act upon pairs of spins from the two chains respectively.

In a similar way, we can construct \emph{the $\half \otimes \half$ projective representation of $G_2$} on a pair of spin-$\half$ particles by mapping each rotation to its corresponding spin-$\half$ rotation operator
\begin{align}\label{eq:g2repdefn}
(\sqrt{R^{\hat{z}}},R^{\hat{x}})&\mapsto \exp (i\frac{\pi}{4} \sigma^{\hat{z}}) \otimes \exp (i\frac{\pi}{2} \sigma^{\hat{x}}), \nonumber \\
 (R^{\hat{u}},R^{\hat{u}}) &\mapsto \exp (i\frac{\pi}{2} \sigma^{\hat{u}}) \otimes \exp (i\frac{\pi}{2} \sigma^{\hat{u}}). 
\end{align}
We will show that this projective representation is irreducible and nontrivial.

We introduced the classification of 1D SP phases protected by an on-site symmetry group $G$ in terms of its second cohomology class $H^2(G,U(1))$ in Sec.~\ref{sec:SPintro}. Each ground state is labeled by the equivalence class $[\omega]$ of the factor set $\omega$ induced by the projective representation of $G$ acting upon its boundary modes.  We will show now that the symmetry group $G_2$ of the coupling Hamiltonian $W^{AB}$ protects a nontrivial SP phase with four-dimensional $\half\otimes\half$ boundary excitations and hence the two-qubit gate is symmetry-protected in the same sense as the single-qubit gates. Before doing so, we will introduce the arguments used in the proof, with the known example of a single chain protected by a $D_2$ symmetry group.

It is well known that the $D_2$ symmetry protects a nontrivial SP phase, labeled by the nontrivial element of its second cohomology class $H^2(D_2,U(1))=\mathbb{Z}_2$. This phase is characterized by the irreducible, nontrivial spin-$\half$ representation of $D_2$ that maps the non-unit group elements to Pauli operators $X$, $Y$ and $Z$ (up to multiplicative constants). 

We will go through the arguments leading to the conclusion that this representation of $D_2$ is nontrivial and irreducible, developing tools that will be useful in making the same arguments for $G_2$. To prove the irreducibility condition we will first outline a lemma which is key to the type of arguments we want to make.
\begin{lem}
\label{lem:1}
Any proper, nontrivial invariant subspace $\mathcal{W}\subsetneq \mathcal{H}$ of a unitary matrix $U:\mathcal{H}\rightarrow \mathcal{H}$ is spanned by the eigenvectors of the matrix $U\restr_{\mathcal{W}}$ formed by restricting $U$ to the subspace $\mathcal{W}$.
\end{lem}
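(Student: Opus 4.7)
The plan is to show that the restriction $U\restr_{\mathcal{W}}$ is itself unitary as an operator on $\mathcal{W}$, and then invoke the spectral theorem for unitary operators on a finite-dimensional Hilbert space, which guarantees a basis of eigenvectors.

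First I would verify that $U\restr_{\mathcal{W}}$ is a well-defined endomorphism of $\mathcal{W}$. By hypothesis $U\mathcal{W}\subseteq\mathcal{W}$, so the restriction maps $\mathcal{W}$ into $\mathcal{W}$. Since $U$ is unitary on $\mathcal{H}$ it is in particular injective, so $U\restr_{\mathcal{W}}$ is injective on $\mathcal{W}$; because $\mathcal{H}$ (and hence $\mathcal{W}$) is finite-dimensional, the rank-nullity theorem then yields $U\mathcal{W}=\mathcal{W}$, so $U\restr_{\mathcal{W}}$ is a bijection from $\mathcal{W}$ to itself.

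Next I would note that for any $v,w\in\mathcal{W}\subseteq\mathcal{H}$ the full-space unitarity of $U$ gives $\langle U v,U w\rangle=\langle v,w\rangle$, so $U\restr_{\mathcal{W}}$ preserves the inner product inherited by $\mathcal{W}$ from $\mathcal{H}$. A bijective inner-product-preserving linear map on a finite-dimensional Hilbert space is unitary, hence normal.

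Finally, by the spectral theorem for normal (equivalently, unitary) operators on a finite-dimensional Hilbert space, $U\restr_{\mathcal{W}}$ is diagonalisable, so $\mathcal{W}$ admits an orthonormal basis consisting of eigenvectors of $U\restr_{\mathcal{W}}$; in particular $\mathcal{W}$ is spanned by these eigenvectors, which is exactly the claim. No real obstacle arises here, as the statement is essentially a repackaging of the spectral theorem applied to the unitary obtained by restriction; the only subtlety to mention explicitly is the step $U\mathcal{W}=\mathcal{W}$, which uses finite-dimensionality together with injectivity of $U$.
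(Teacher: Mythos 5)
Your proof is correct and complete. The paper does not supply its own argument for this lemma---it simply defers to the reference of Radjavi and Rosenthal on invariant subspaces---so there is no in-paper proof to compare against; your route (restriction of a unitary to an invariant subspace is a well-defined bijection by injectivity plus finite-dimensionality, is an isometry hence unitary on $\mathcal{W}$, and is therefore diagonalisable by the spectral theorem) is the standard argument and is exactly what the citation is standing in for. You are also right to single out the step $U\mathcal{W}=\mathcal{W}$ as the place where finite-dimensionality enters; this is precisely where the statement fails for unitaries on infinite-dimensional spaces (e.g.\ the bilateral shift has invariant subspaces with no eigenvectors at all), and it is the reason the paper flags the ``diagonal fallacy'' for general matrices.
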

\noindent For a proof, see Ref.~\cite{radjavi2003invariant}.  (We make explicit mention of this fact because it is not true for general matrices. For a general matrix $M$ assuming that all invariant subspaces are spanned by eigenvectors of $M$ is known as the `diagonal fallacy'.)

We can easily see that the Pauli projective representation of $D_2$ is irreducible, because all its matrices are unitary and no two Pauli operators share a joint eigenspace. Hence by Lemma \ref{lem:1} there can be no proper subspace left invariant under all actions in this projective representation.
 
To see that the factor system is nontrivial we must show that multiplication by any 2-coboundary cannot take it to the trivial factor system. This is not particularly easy to see directly and so we introduce a function $\varphi$ that will give us an easily calculable, sufficient condition for the nontriviality of a particular factor system. This function $\varphi:G\rightarrow U(1)$ is given (see Ref.~\cite{costache2009irreducible}) by the sum
\begin{equation}
\label{factfunc}
\varphi_{\omega} (a) = \sum_{b\in G} \left( \frac{ \omega(a,b)}{ \omega (b,a) } \right) = \sum_{b\in G} { \omega(a,b)} { \omega^* (b,a) } 
\end{equation}
since $\omega^{-1}=\omega^*$ for a complex phase $\omega \in U(1)$.
\begin{lem}
\label{lem:2}
For any two factor systems of $G$, say $\omega$ and $\nu$, the existence of some $a \in Z(G)$ for which $\varphi_{\omega}(a)\neq \varphi_{\nu}(a)$ implies that $[\omega]\neq [\nu]$ (i.e. that $\omega$ and $\nu$ lie in distinct cohomology classes).
\end{lem}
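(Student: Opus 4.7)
The plan is to prove the contrapositive: if $[\omega] = [\nu]$, then $\varphi_\omega(a) = \varphi_\nu(a)$ for every $a \in Z(G)$. In other words, the restriction of $\varphi$ to the center of $G$ is a cohomology invariant, so any central element at which $\varphi_\omega$ and $\varphi_\nu$ take different values must witness that the two factor systems lie in distinct classes.

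Concretely, if $[\omega]=[\nu]$ then, by the construction of $H^2(G,U(1))$ reviewed in Sec.~\ref{sec:SPintro}, there exists a phase function $\beta: G \to U(1)$ such that
\begin{equation}
\nu(g,h) = \omega(g,h)\,\frac{\beta(g)\beta(h)}{\beta(gh)}
\end{equation}
for all $g,h\in G$. Substituting this into the definition
\begin{equation}
\varphi_\nu(a) = \sum_{b\in G}\nu(a,b)\,\nu^{*}(b,a)
\end{equation}
separates the $\omega$-part from the coboundary factors. The $\omega$-part reproduces $\omega(a,b)\omega^{*}(b,a)$ exactly, while the coboundary contribution collects into
$\beta(a)\beta(b)\beta^{*}(b)\beta^{*}(a)/\bigl[\beta(ab)\,\beta^{*}(ba)\bigr]$.

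The crucial step is to invoke centrality. Because $a\in Z(G)$ we have $ab=ba$, so $\beta(ab)=\beta(ba)$ and the denominator reduces to $|\beta(ab)|^2=1$ since $\beta$ is valued in $U(1)$; likewise the numerator equals $|\beta(a)|^2|\beta(b)|^2=1$. The coboundary factor is therefore identically one, leaving $\varphi_\nu(a)=\sum_{b}\omega(a,b)\omega^{*}(b,a)=\varphi_\omega(a)$, which establishes the contrapositive and hence the lemma.

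I do not anticipate a genuine obstacle. The content of the lemma is really the observation that the ratio $\omega(a,b)/\omega(b,a)$ is invariant under coboundaries \emph{precisely} when $a$ is central; without centrality the factors $\beta(ab)$ and $\beta(ba)$ need not agree and the invariant fails, which explains why $\varphi$ is only useful on $Z(G)$. The remainder is routine bookkeeping of unit-modulus phases.
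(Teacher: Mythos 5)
Your proof is correct and follows essentially the same route as the paper: prove the contrapositive by substituting the coboundary relation into $\varphi_\nu$ and using $ab=ba$ for central $a$ to cancel the $\beta$-factors. The bookkeeping of the unit-modulus phases is accurate, so nothing further is needed.
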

\vspace{-0.3cm}
\begin{proof}
Suppose we are given two factor systems $\omega$ and $\nu$ from the same cohomology class, by definition they are related by some 2-coboundary function $\left[\mu(ab)/\mu(a)\mu(b)\right]$ i.e.
\begin{equation}
\label{eq:2cobp}
\mu(ab)\omega(a,b)=\mu(a)\mu(b)\nu(a,b)\ .
\end{equation}
This implies that the functions $\varphi_{\omega}$ and $\varphi_{\nu}$ are equal for all group elements in the center\footnote{For an Abelian group $G=Z(G)$, hence $\omega\sim\nu\implies\varphi_{\omega}\equiv\varphi_{\nu}$.} of the symmetry group, i.e. $ a\in Z(G)\implies \varphi_{\omega}(a)=\varphi_{\nu}(a)$,  since
\begin{align}
\varphi_{\omega} (a)&= \sum_{b\in G} \left( \frac{ \omega(a,b)}{ \omega (b,a) } \right)  = \sum_{b\in G} \left( \frac{ \mu(a)\mu(b)}{ \mu(b)\mu(a) } \frac{\mu(ba) }{\mu(ab) } \frac{\nu(a,b)}{ \nu(b,a) } \right) \nonumber \\
&=  \sum_{b\in G} \left(  \frac{\nu(a,b)}{ \nu(b,a) } \right) = \varphi_{\nu}(a)
\end{align}
where we have used Eq.~\eqref{eq:2cobp} and then the fact that $ab=ba$, $\forall\  b\in G,\ \forall\  a\in Z(G)$.
So far we have shown that $[\omega]=[\nu]\implies[ a\in Z(G)\implies \varphi_{\omega}(a)=\varphi_{\nu}(a)]$, taking the contrapositive of this statement we achieve the desired result
\end{proof}
We will now use Lemma~\ref{lem:2} to give a simple sufficient condition for the nontriviality of a factor system. We first note that for the trivial factor system (defined to be $1 (a,b)=1$, $\forall a,b\in G$) we have $\varphi_{1} (a) = |G|$ for all group elements $a\in G$. Hence for any factor system $\omega$, the existence of a group element $a \in Z(G)$ for which $\varphi_\omega (a) \neq |G|$ implies by Lemma~\ref{lem:2} that $\omega$ is nontrivial.

\begin{table}[t]
\centering
\begin{tabular}{| c || c | c | c | c |}
\hline
$\omega$ & $I$ & $X$ & $Y$ & $Z$ \\
\hline \hline
$I$ & $1$ & $1$ & $1$ & $1$ \\
\hline
$X$ & $1$ & $1$ & $i$ & $-i$ \\
\hline
$Y$ & $1$ & $-i$ & $1$ & $i$ \\
\hline
$Z$ & $1$ & $i$ & $-i$ & $1$ \\
\hline
\end{tabular}
\caption{The factor system of the Pauli projective representation of $D_2$.}
\label{tab:fs}
\end{table}

In the particular case of the Pauli projective representation of $D_2$ one can readily verify from the factor system in Table~\ref{tab:fs} that $\varphi_\omega(X)=\varphi_\omega(Y)=\varphi_\omega(Z)=0$ and hence that the factor system $\omega$ is nontrivial.

We now show that the particular projective representation of $G_2$ given by qubit rotation operators on the four-dimensional $\half \otimes \half$ Hilbert space is irreducible and nontrivial using the arguments introduced above. 
\begin{thm}
\label{thm:1}
The projective representation of $G_2$, constructed by mapping each pair of rotations to their corresponding rotation operators on the four dimensional $\half \otimes \half$ Hilbert space, is irreducible and nontrivial.
\end{thm}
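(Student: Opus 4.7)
The plan is to establish irreducibility and nontriviality of the factor system separately, invoking Lemma~\ref{lem:1} and Lemma~\ref{lem:2} respectively.

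For irreducibility, I would first compute the images of the central elements $\alpha^2,(\alpha\beta)^2 \in Z(G_2)$, which correspond in $SO(3)\times SO(3)$ to $(R^{\hat{z}},1)$ and $(1,R^{\hat{z}})$. In the $\half\otimes\half$ projective representation defined in Eq.~\eqref{eq:g2repdefn}, these map (up to an overall scalar) to $\sigma^z\otimes I$ and $I\otimes\sigma^z$. These two operators commute, and their joint eigenspaces on $\mathbb{C}^2\otimes\mathbb{C}^2$ are precisely the four one-dimensional lines spanned by the computational basis vectors $\ket{00},\ket{01},\ket{10},\ket{11}$. By Lemma~\ref{lem:1}, any $G_2$-invariant subspace must therefore be a direct sum of a subset of these four lines. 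It then suffices to verify that the generators $V_\alpha=e^{i\pi\sigma^z/4}\otimes i\sigma^x$ and $V_\beta=-\sigma^u\otimes\sigma^u$ act transitively on the four basis vectors: $V_\alpha$ flips only the second bit (its first tensor factor being diagonal), while $V_\beta$ flips both bits simultaneously, and composing these two operations generates every bit-string permutation.

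For nontriviality, I would apply Lemma~\ref{lem:2} with the central element $a=\alpha^2$. Centrality of $a$ ensures that for every $b\in G_2$ the ratio $\omega(a,b)/\omega(b,a)$ equals the scalar group commutator $V_a V_b V_a^{-1} V_b^{-1}$. Taking $b=\beta$, a short computation using $V_{\alpha^2}\propto\sigma^z\otimes I$, $V_\beta\propto\sigma^u\otimes\sigma^u$, the anticommutation $\{\sigma^z,\sigma^u\}=0$ (since $\sigma^u=(\sigma^x+\sigma^y)/\sqrt{2}$), and $(\sigma^u)^2=I$ gives $V_{\alpha^2}V_\beta V_{\alpha^2}^{-1} V_\beta^{-1} = -I$. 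Hence at least one summand of
\[
\varphi_\omega(\alpha^2) = \sum_{b\in G_2}\omega(\alpha^2,b)\,\omega^*(b,\alpha^2)
\]
equals $-1$. For the trivial factor system every summand is $+1$ and $\varphi_1(\alpha^2)=|G_2|=16$, so $\varphi_\omega(\alpha^2)\neq 16$, and Lemma~\ref{lem:2} immediately yields $[\omega]\neq[1]$.

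The main obstacle will be the transitivity step in the irreducibility argument: certain subsets, most notably $\{\ket{00},\ket{11}\}$ and $\{\ket{01},\ket{10}\}$, are in fact closed under $V_\beta$ (which preserves total parity), so the proof depends essentially on the parity-breaking action of the $i\sigma^x$ factor in $V_\alpha$. Once one verifies that no parity-respecting subspace survives the action of $V_\alpha$, the rest of the argument reduces to the routine Pauli algebra outlined above for the commutator identity.
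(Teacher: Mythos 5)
Your proof is correct, and it diverges from the paper's in a way worth noting. For nontriviality the two arguments are essentially the same --- both apply Lemma~\ref{lem:2} to central elements --- but your observation that for $a\in Z(G_2)$ the ratio $\omega(a,b)/\omega(b,a)$ is the scalar group commutator $V_aV_bV_a^{-1}V_b^{-1}$ (hence independent of the phase conventions in $V$) reduces the check to a one-line Pauli computation, whereas the paper tabulates the factor system and evaluates the full sums $\varphi_\omega(R^{\hat z},1)=\varphi_\omega(1,R^{\hat z})=\varphi_\omega(R^{\hat z},R^{\hat z})=0$; your version proves slightly less ($\varphi_\omega(\alpha^2)\neq 16$ rather than $=0$) but that is all Lemma~\ref{lem:2} requires, and your closing inference is sound since each summand is a unit phase, so the sum equals $|G_2|=16$ only if every summand is $1$. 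For irreducibility your route is genuinely different: the paper runs a case analysis on the possible dimensions of invariant subspaces ($1+3$ versus $2+2$) and excludes each by an ``explicit calculation (not shown)'' comparing eigenvectors of two representation matrices, while you exploit the images of the central elements $\alpha^2$ and $(\alpha\beta)^2$, which are proportional to $\sigma^z\otimes I$ and $I\otimes\sigma^z$ and have one-dimensional joint eigenspaces, to force any invariant subspace to be a sum of computational basis lines, and then verify transitivity of the generators on those four lines. This is more self-contained and verifiable than the paper's argument (which also invokes the element $(R^{\hat x},\sqrt{R^{\hat z}})$ without establishing its membership in $G_2$), and you correctly identify the one point that needs care --- $V_\beta$ preserves parity, so the orbit argument hinges on the bit-flip in the second factor of $V_\alpha$. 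Both halves rest on the same lemmas as the paper, so the logical scaffolding is shared; the gain in your version is that every step is an explicit, checkable computation.
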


\begin{proof}
To prove the irreducibility of the $\half \otimes \half$ projective representation of the symmetry group $G_2$ we consider the possibilities for invariant subspaces of $\mathbb{C}^4$ under all the $4\times4$ matrices in this representation of $G_2$. A consequence of Lemma \ref{lem:1} is that any unitary matrix $U:\mathbb{C}^n \rightarrow \mathbb{C}^n$ which possesses an invariant subspace $\mathcal{W}$ must have a block structure, also leaving the orthogonal subspace $\mathcal{W}^{\perp}$ invariant under $U$. 

For the $\half \otimes \half$ projective representation of $G_2$, we are left with the possibility of either two invariant two-dimensional subspaces, or a one- and a three-dimensional invariant subspace. (Note that the invariant subspaces of dimension $d >1$ may split up further.)  We can immediately exclude the later case, because the existence of a one-dimensional invariant subspace would require all matrices in the projective representation to share an eigenstate due to Lemma~\ref{lem:1}, and an explicit calculation (not shown) confirms that this is not the case.

Any two-dimensional invariant subspace of a matrix $U$ in the projective representation must be the span of two eigenvectors, as guaranteed by Lemma~\ref{lem:1}. If all matrices of the projective representation are to share such an invariant subspace $\mathcal{W}$ it must be possible to write at least one eigenvector of any matrix $U_1$ as a linear combination of two eigenvectors of any other matrix $U_2$. Again we have confirmed that this is not the case for the $\half \otimes \half$  projective representation of $G_2$. A counter example is given by considering the eigenvectors of the matrices representing the group elements $(\sqrt{R^{\hat{z}}},R^{\hat{x}})$ and $(R^{\hat{x}},\sqrt{R^{\hat{z}}})$. In particular, we can see by explicitly listing the eigenvectors of the matrix representation \eqref{eq:g2repdefn} that no linear combination of any two eigenvectors of one matrix can be used to form an eigenvector of the other. Hence the $\half \otimes \half$ projective representation must be irreducible, since there cannot be any subspace invariant under all of its operators.

To show that the $\half \otimes \half$ projective representation has a nontrivial factor system $\omega$ we consider the function $\varphi_\omega$ defined in Eq. \eqref{factfunc}. Then we calculate $\varphi_\omega$ directly from the factor system (implicitly defined by the choice of representation) to find $\varphi_{\omega} (R^{\hat{z}},1)=\varphi_{\omega} (1,R^{\hat{z}})=\varphi_{\omega} (R^{\hat{z}},R^{\hat{z}})=0$, a sufficient condition to conclude that $\omega$ cannot lie within the trivial cohomology class after invoking Lemma \ref{lem:2}. 
\end{proof}

\begin{cor}
The symmetry group $G_2$ protects at least one nontrivial SP phase, labeled by the factor system $\omega$ of the representation described in Eq.~\eqref{eq:g2repdefn}.
\end{cor}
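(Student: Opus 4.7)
The plan is to combine Theorem~\ref{thm:1} with the one-to-one correspondence between 1D SP phases and elements of $H^2(G,U(1))$ that was reviewed in Sec.~\ref{sec:SPintro}. Theorem~\ref{thm:1} exhibits an explicit projective representation of $G_2$ whose factor system $\omega$ is shown to lie outside the trivial cohomology class, which directly produces a nontrivial element $[\omega]\in H^2(G_2,U(1))$. By the classification result, every distinct cohomology class labels a distinct SP phase, and the trivial class is the unique one containing a symmetric product state; therefore $[\omega]$ labels a SP phase that is necessarily nontrivial.

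To make the existence concrete, I would then exhibit a local $G_2$-symmetric gapped Hamiltonian whose fractionalized edge modes carry precisely the representation of Eq.~\eqref{eq:g2repdefn}. The natural candidate is the two-coupled-chain model of Sec.~\ref{sec:2qbit}: the bulk term $H^{A,\text{Haldane}}_{2,n}+H^{B,\text{Haldane}}_{2,n}$ together with a boundary coupling derived from $W^{AB}$ is manifestly invariant under $G_2\subset SO(3)\times SO(3)$, is uniformly gapped in the Haldane parameter range, and by the analysis of the two-qubit gate in Sec.~\ref{sec:evo2} its free boundary carries exactly the $\tfrac{1}{2}\otimes\tfrac{1}{2}$ action of Eq.~\eqref{eq:g2repdefn}. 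Irreducibility and nontriviality of this action, as established in Theorem~\ref{thm:1}, then guarantee via the general arguments of Sec.~\ref{sec:SPencoding} that no symmetry-respecting, gap-preserving adiabatic path can deform the model to a symmetric product state.

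The main subtlety, and the step I would be most careful with, is confirming that the restriction from the full rotation symmetry $SO(3)\times SO(3)$ down to $G_2$ does not collapse the edge-mode representation into a direct sum of smaller representations with a trivial cohomology label. This cannot happen here for two reasons: the cohomology class is a discrete invariant which is constant along any symmetric gapped path, so it suffices to check it at a convenient point (the decoupled AKLT point of each chain, where the edge action is manifestly the one in Eq.~\eqref{eq:g2repdefn}); and Theorem~\ref{thm:1} already shows that the $\tfrac{1}{2}\otimes\tfrac{1}{2}$ representation \emph{under $G_2$ itself} is irreducible, ruling out any further reduction. With these two observations in place, the corollary follows immediately from Theorem~\ref{thm:1} and the cohomological classification, with no further computation required.
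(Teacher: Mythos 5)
Your proposal is correct and follows essentially the same route as the paper, which states the corollary as an immediate consequence of Theorem~\ref{thm:1} together with the cohomological classification of 1D SP phases from Sec.~\ref{sec:SPintro}, offering no further proof. The additional material you supply --- the explicit two-chain Hamiltonian realizing the edge representation and the check that restricting from $SO(3)\times SO(3)$ to $G_2$ does not trivialize the class --- is consistent with the paper's surrounding discussion but is not needed for, and does not change, the one-line argument.
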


In this section we have shown that the symmetry group $G_2$ of the two-qubit coupling Hamiltonian $W^{AB}$ protects a nontrivial SP phase which supports two-qubit boundary modes. Hence the two-qubit gate generated under this Hamiltonian is protected against symmetric perturbations by $G_2$. In the next section we will use this result, along with the symmetry-protection of the single-qubit operation to find a minimal symmetry requirement for universal quantum computation using only these symmetry-protected gates.

\subsection{Symmetry Requirements for Universal Quantum Computation}\label{sec:symreq}

Using the results of the previous section, we now give minimal symmetry requirements for universal quantum computation with only symmetry-protected gates.

The process used to generate arbitrary single-qubit gates proposed in Ref.~\cite{renes2011holonomic} requires a continuous, time-dependent embedding of the $D_2$ symmetry, that protects the single-qubit gate, within the full $SO(3)$ symmetry of the chain. For practical simplicity (and to avoid such considerations) we will now investigate the minimal set of different symmetries required for universal quantum computation with only symmetry-protected gates. Recall that a universal gate set can be achieved by generating arbitrary single-qubit gates along with a nontrivial entangling gate.

For the single-qubit gates, our argument relies on the geometric result of applying a pair of $\pi$-rotations about non-orthogonal axes depicted in Fig.~\ref{fig:cosrot}. Applying a $\pi$-rotation about the $\hat{m}$ axis, followed by a $\pi$-rotation about the $\hat{m}' $ axis amounts to a total rotation through the angle $\left[ 2\cos^{-1} (\hat{m}\cdot \hat{m}' ) \right]$ about the $\hat{m}\times \hat{m}'$ axis. In this way we can perform a smaller rotation of the encoded qubit by applying the one qubit gate described in Eq. \eqref{1qbitholonomy} twice, picking a different, but fixed, embedding of $D_2$ for each evolution.

To find exactly what symmetry embeddings are sufficient to simulate any single-qubit unitary transformation efficiently, we make use of the Solovay-Kitaev theorem. Specifically we will use a corollary of this theorem described in Ref.~\cite{nielsen2002quantum} which ensures that any single-qubit unitary can be efficiently decomposed into a product of Hadamard, Phase and $\pi/8$ gates. 

Hence we need only generate enough different embeddings of the $D_2$ symmetry to make performing these three gates possible, through the repeated application of $\pi$-rotations. The Hadamard gate can be performed using an embedding with $\pi$-rotations about the axes $\hat{\mu},\hat{\nu},\hat{y}$, where $\mu=x+z$, $\nu=x-z$. The Phase gate requires a combination of two different embeddings, a suitable choice is given by the standard embedding defined by the $\hat{x},\hat{y},\hat{z}$ axes accompanied by a rotation of this embedding by $(-\pi /4)$ about the $\hat{z}$ axis. Similarly the $\pi /8$ gate can be generated with a combination of two embeddings, using the previously chosen standard embedding defined by the $\hat{x},\hat{y},\hat{z}$ axes along with another rotation of this embedding, this time by $(-\pi/8)$ about the $\hat{z}$ axis, will suffice. So we only need four independent embeddings of $D_2\subset SO(3)$, as depicted in Fig.~\ref{fig:embed}, to generate arbitrary single-qubit gates (note that we could in fact use only three embeddings as the Phase gate is generated by two consecutive $\pi/8$ gates).

Hence we have identified the ability to perform universal quantum computation in this model with the ability to generate four independent embeddings of the $D_2$ symmetry protecting the single-qubit gate combined with the ability to generate the nontrivial two-qubit entangling gate.

\begin{figure}[ht]
\centering
\includegraphics[width=5.5cm]{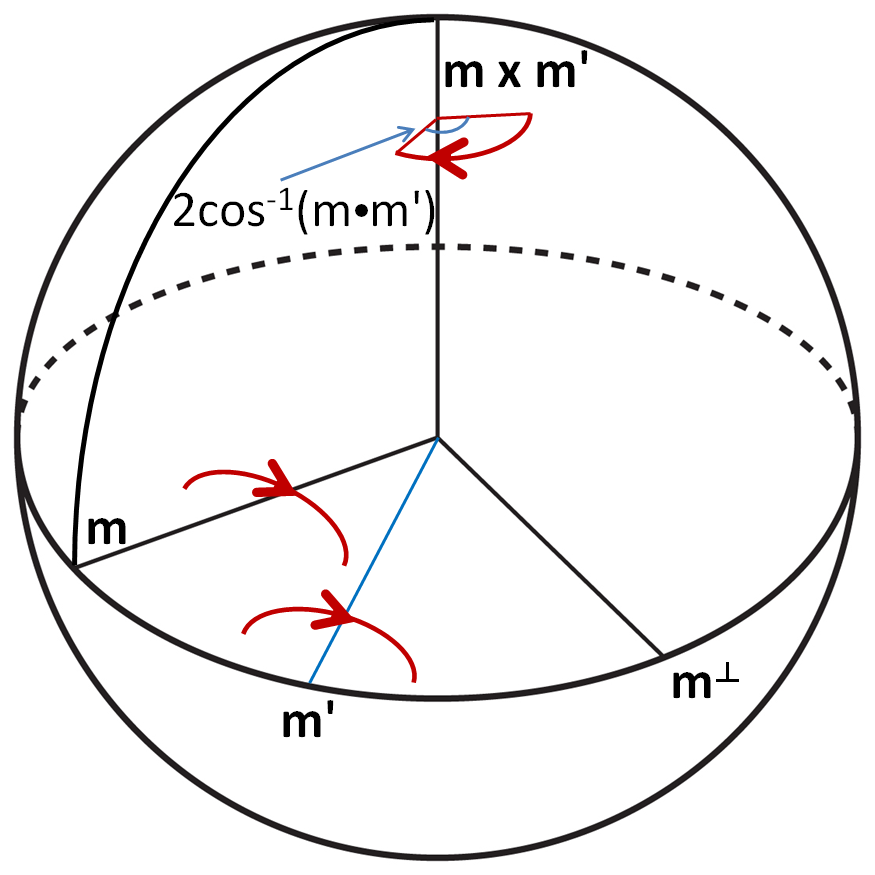}
\caption{Result of consecutive $\pi$-rotations about two non-orthogonal axes.}
\label{fig:cosrot}
\end{figure}

The benefits of finite embedding include ease of implementation, and the ability at each point to tolerate more general perturbations to the Hamiltonian (symmetric under a specific set of $\pi$-rotations). However, since we produce a set of gates dense in all rotations of the encoded qubits, the set of perturbations to the Hamiltonian that are protected against throughout all the single-qubit gates must have a rotation symmetry that is dense in all rotations, essentially the same as the full SO(3) symmetry.

\begin{figure}[ht]
\centering
\includegraphics[width=5.5cm]{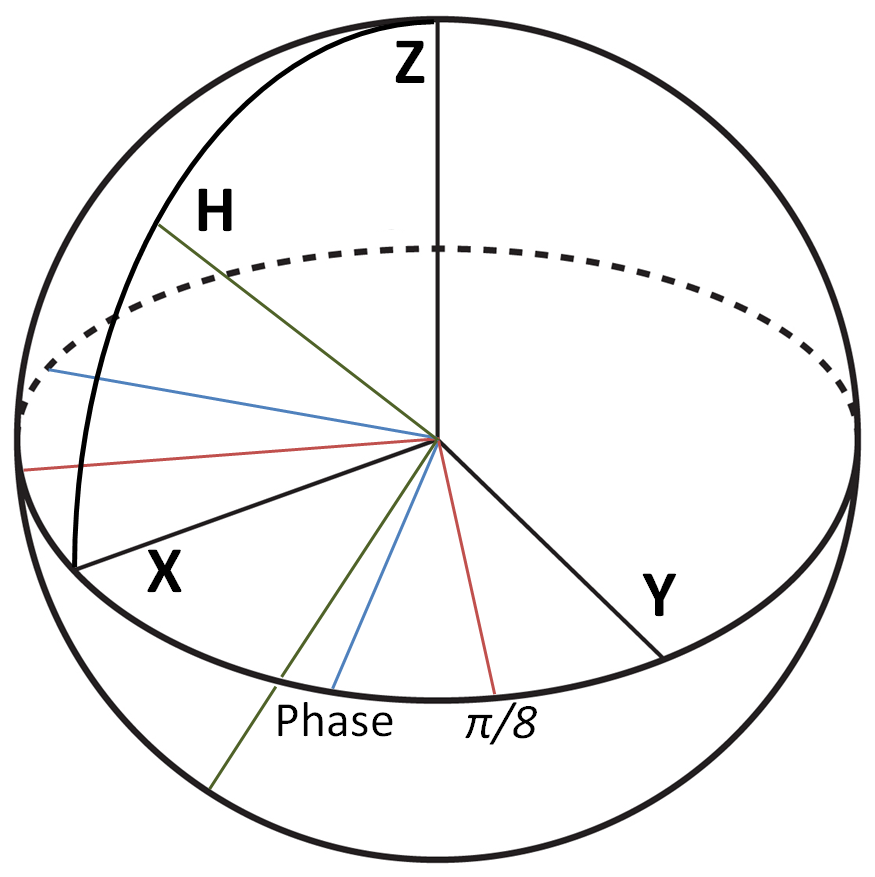}
\caption{Finite symmetry embedding requirement for universal quantum computation.}
\label{fig:embed}
\end{figure}

\end{document}